\numberwithin{equation}{section}
\title[Warped product and Yang-Mills equations]{Warped Products and Yang-Mills equations on non commutative spaces}
\date{21 march 2014}
\author{Alessandro Zampini}
\address{Universit\'e du Luxembourg, Facult\'e des Sciences, de la Technologie et de la Communication, Mathematics Research Unit -- 6, rue Richard Coudenhove-Kalergi, L-1359 Luxembourg}
 \email{alessandro.zampini@uni.lu}
\newtheorem{prop}{Proposition}[section]
\newtheorem{lem}[prop]{Lemma}
\theoremstyle{definition}
\newcommand{\R}{\mathbb{R}}
\newcommand{\C}{\mathbb{C}}
\newcommand{\N}{\mathbb{N}}
\newcommand{\bS}{\mathbb{S}}
\newcommand{\CP}{\mathbb{C}\mathrm{P}}
\newcommand{\Q}{\mathbb{H}}
\newcommand{\A}{\mathcal{A}}
\newcommand{\E}{\mathcal{E}}
\newcommand{\mc}{\mathcal}
\newcommand{\openone}{\leavevmode\hbox{\small1\kern-3.8pt\normalsize 1}}
\newcommand{\dd}{\mathrm{d}}
\newcommand{\id}{\mathsf{id}}
\renewcommand{\S}{Sec.~}
\newcommand{\beq}{\begin{equation}}
\newcommand{\eeq}{\end{equation}}
\newcommand{\nn}{\nonumber}
\newcommand{\ca}{\mathcal{A}}
\newcommand{\cu}{\mathcal{U}}        
\newcommand{\SU}{\mathrm{SU}_q(2)}  
\newcommand{\ASU}{\ca(\mathrm{SU}_q(2))}  
\newcommand{\su}{\cu_q(\mathfrak{su}(2))}  
\newcommand{\eps}{\varepsilon}      
\newcommand{\hs}[2]{\left\langle #1,#2\right\rangle}  
\newcommand{\lt}{{\triangleright}}    
\newcommand{\rt}{{\triangleleft}}
\newcommand{\IC}{{\mathbb C}} 
\newcommand{\IR}{{\mathbb R}} 
\DeclareMathOperator{\U}{U}       
\begin{document}

\thispagestyle{empty}

\begin{abstract}
This paper presents a non self-dual solution of the Yang-Mills equations on a non commutative version of the classical $\R^4\smallsetminus\{0\}$, so generalizing the classical meron solution first introduced by de Alfaro, Fubini and Furlan in 1976. The basic tool for that is a generalization to non commutative spaces of the classical notion of warped products between metric spaces. 
\end{abstract}

\maketitle
\tableofcontents

\section*{Introduction}

\noindent 

The subject of Yang-Mills theory for projective modules over a $C^*$-algebra was introduced in \cite{CR87}, where the example of the non commutative torus was analyzed.  This originated an intense activity, which was mainly  focused on obtaining self-dual (or anti self-dual) solutions -- \emph{instantons} or \emph{anti-instantons} -- of Yang-Mills equations on suitable non commutative spaces. 

A class of them  is obtained by the  cocycle quantization associated with a torus action on the corresponding classical ones. This procedure is referred to as  $\theta$-deformation, the main example being the quantization of the $4$-dimensional sphere (see e.g.~\cite{BLvS12} and references therein).
The ADHM construction, which gave a complete description of instantons on $\bS^4$,
 has been adapted to several noncommutative spaces, among them the toric deformations $\bS^4_\theta$ of the $4$-sphere \cite{BL09}, the Moyal and toric deformations of $\R^4$ \cite{BvS10},  $\CP^2_\theta$ \cite{CLS11}.
General properties (e.g.~the infinitesimal structure) of the moduli space of
instantons on projective modules over an arbitrary toric deformation $M_\theta$
have been studied in \cite{BLvS12} (but notice that no explicit formula for
instantons is known with such a generality, even classically).
Reducible and irreducible instantons on $\CP^2_q$ have been indeed studied respectively
in \cite{DL09} and \cite{DL13}.
Beside this effort in obtaining  self-dual solutions, which provide -- in the classical setting -- informations about the geometry of the space itself, 
the study of non self-dual solutions on quantum spaces is a rather unexplored topic.

The aim of the present paper is to give, in analogy with the classical case, a non-selfdual solution for the Yang-Mills equation on a quantum analogue of the classical $\R^4\smallsetminus\{0\}$.

In order to shape the classical setting of the paper, consider the class of oriented spheres ${\bS}^n$ for $n\geq 2$ with their standard Riemannian metrics and the corresponding unique spinor bundles over them, namely the principal fibrations ${\rm Spin}(n+1)/{\rm Spin}(n)\to {\bS}^n$. The Levi-Civita connection associated to the standard Riemannian metric on ${\bS}^n$ can be lifted to a spinor connection, that is a connection on the principal bundle: it gives \cite{Lan86} a solution of the Yang-Mills equations on ${\bS}^{n}$.  If such a  spinor connection is transformed, via a stereographic projection,  into a connection on the Euclidean space $\IR^n$, then the resulting connection is a solution of the Yang-Mills equations on $\IR^n$ if and only if $n=4$.

For $n=2$ one has ${\rm Spin}(3)\,=\,{\rm SU(2)}\,=\,{\bS}^3$ and ${\rm Spin}(2)\,=\,\U(1)$; this spinor bundle gives the Hopf bundle ${\bS}^3/\U(1)\,=\,{\bS}^2$ and the connection above describes the monopole of lowest strenght. For $n=3$ one has 
${\rm Spin}(4)\,=\,{\rm SU(2)}\times{\rm SU}(2)\,=\,{\bS}^3\times{\rm SU}(2)$ and ${\rm Spin}(3)\,=\,{\rm SU}(2)$; the spinor connection $A$ on the bundle 
$({\bS}^3\times{\rm SU}(2))/{\rm SU}(2)\,=\,{\bS}^3$ turns out to be $A\,=\,(1/2)\omega$, where $\omega$ is the Maurer-Cartan form on ${\rm SU(2)}$. If $\pi\,:\,\R^4\backslash\{0\}\,\to\,\bS^3$ is the natural euclidean projection, then $\pi^*A$ is the \emph{meron} solution of Yang-Mills equations in four dimensions \cite{DFF76}.

The problem of introducing a meaningful spin connection on a wide class of quantum groups has been approached in \cite{HS97, Hec03} in the formalism of bicovariant bimodules, and in \cite{Maj99, Maj02, Maj03} using the braided formalism for the exterior algebras associated to universal and non universal --- still bicovariant --- differential calculi.

This paper presents  a generalization of the meron solution to the quantum group setting. If the quantum group $\SU$ is equipped with the widely studied three dimensional left covariant Woronowicz' calculus \cite{Wor87}, then, using the formalism developed in \cite{Hec01} a Maurer-Cartan $\omega$ form can be defined although the differential calculus is not bicovariant \cite{Wor89} and a suitable Cartan-Killing metrics can indeed be introduced with a meaningful Hodge operator. The vector potential $\mathfrak{A}\,=\,(1/2)\omega$ gives a solution for the corresponding Yang-Mills equations on $\SU$. 

Although our interest in the present paper concerns the structure of Yang-Mills equations, it is interesting to notice that the properties of the connection $\mathfrak{A}$ suggest to see it as a possible spin connection on $\SU$ with a 
three dimensional differential calculus (in this respect we recall \cite{BM11} for another  example of a spin connection --- coming from the introduction of suitable braiding operators --- on the same exterior algebra). We nonetheless leave this topic to a further development and extend to the product of non commutative spaces the  
the notion of almost product metrics, and explicitly study the example of the Hopf $*$-algebra injection $i:\ASU\,\hookrightarrow\,\A(GL_q(1, \Q))$ (the quantum group of invertible quaternions) showing that the connection $\mathfrak{A}$ can be mapped into a solution of the Yang-Mills equation on $\R^4_q\backslash\{0\}$. Such a solution is what we denote by \emph{quantum meron}, and has no instantonic (or anti-instantonic) character with respect to the natural Hodge duality operator.

The plan of the paper is the following.
In section \ref{sec:1}, we give a review of Yang-Mills theory for projective modules over a $*$-algebra.
In section \ref{sec:YMS3}, we recall the construction of \cite{Lan86},
and explain how the notion of  warped product between manifolds allows to obtain the meron solution on $\R^4\smallsetminus\{0\}$. 
In section \ref{sec:3}, we extend the notion of warped product to noncommutative spaces and link it to the study of Yang-Mills theory.
In section \ref{s:YMq}, we work out  a solution of the Yang-Mills equations on   $\bS^3_q$ and apply the warped product formalism to obtain a solution of them on $\R^4_q\smallsetminus\{0\}$. 

\subsection*{Acknowledgments}
Studying this problem was a nice suggestion of Giovanni Landi, to whom I feel gratitude also for the constant and precious feedback. I should like to thank Francesco D'Andrea, who helped me in understanding several passages of this analysis, as well as Aiyalam P. Balachandran and Giuseppe Marmo for the  various interesting discussions we had. It is indeed a pleasure for me to thank Norbert Poncin for his mentorship and the University of Luxembourg for the support.

\section{Yang-Mills equations on projective modules}
\label{sec:1}

In the classical framework of differential geometry, consider a  compact $N$-dimensional  smooth manifold equipped with a metric tensor $(M,\,g)$ and a smooth $\Bbbk$-vector bundle $E\,\to\,M$ on it (here $\Bbbk\,=\,\R$ or $\C$). The set $\Gamma(E)$ of smooth sections of the vector bundle is a $A\,=\,C^{\infty}(M, \Bbbk)$-module, where $A$ is the algebra of smooth $\Bbbk$-valued functions on $M$.  The Serre-Swan theorem proves that this assertion can be promoted to a  categorical equivalence.  A right (or left)  module $\Gamma$ over $C^{\infty}(M, \Bbbk)$ is finitely generated projective if and only if  a smooth $\Bbbk$-vector bundle $E\,\to\,M$ exists, such that $\Gamma\,\simeq\,\Gamma(E)$, and that two vector bundles are isomorphic if and only if the corresponding modules of sections are. 

Denoting by $(\Lambda^{\bullet}, \dd)$ the differential graded exterior algebra of forms on $M$, a linear connection is a graded $A$-module derivation $\nabla\,:\,\Gamma(E)\,\otimes_A\Lambda^{\bullet}\,\to\,\Gamma(E)\,\otimes_A\Lambda^{\bullet+1}$, that  describes what is usually referred to as  the vector potential of a gauge field, while elements in $\Gamma(E)$ are the matter fields.  The $A$-module map $F\,=\,\nabla^2\,:\,\Gamma(E)\,\otimes_A\Lambda^{\bullet}\,\to\,\Gamma(E)\otimes_A\Lambda^{\bullet+2}$ is the corresponding curvature: if $D$ suitably extends in terms of graded commutators to $A$-module  maps  the action of the covariant derivative $\nabla$, we can write
\begin{align}
D\,F&=\,0, \\
D(\ast_H\,F)&=\,0, \label{clYM}
\end{align}
where the first line represents the Bianchi identity and the second line the homogeneous Yang-Mills equations. Notice that  $\ast_H\,:\,\Lambda^k\,\to\,\Lambda^{N-k}$ represents the natural extension as an $A$-module map of the  Hodge duality operator\footnote{It satisfies the identity  $\ast_H^2\,=\,({\rm sgn}\,g)\,(-1)^{k(N-k)}$ with $({\rm sgn}\,g) $ the signature of the metrics.} corresponding to the metric tensor $g$ on $M$.

The Serre-Swan theorem suggests that the natural framework for an algebraic formulation of a gauge theory is that of projective (finitely generated) modules over a ---  possibly non commutative --- algebra $A$,  equipped with a finite dimensional differential calculus. In the following pages we shall review how it is possible to introduce on such modules a covariant derivative, i.e. a linear connection, and describe how a hermitian structure over a (say) right $A$-module allows to introduce a Hodge duality and then Yang-Mills equations.

\subsection{Differential graded algebras and connections}
\label{dgac}

Let $\A$ be an associative $*$-algebra over $\Bbbk$. A differential calculus on it is a differential graded algebra,
(DGA -- for this part we refer to \cite{DV01}), that is  the datum $(\Lambda^\bullet,\dd)$
of a graded associative algebra $\Lambda^\bullet=\bigoplus_{k\geq 0}\Lambda^k$ and a
map $\dd:\Lambda^\bullet\to\Lambda^{\bullet+1}$, which is a coboundary, $\dd^2=0$, and 
a graded derivation:
\begin{equation*}
\dd(\alpha\beta)=(\dd\alpha)\beta+(-1)^{\mathrm{dg}(\alpha)}\alpha(\dd\beta) \;.
\end{equation*}
A differential calculus over an associative algebra $\A$ is a DGA $(\Lambda^\bullet,\dd)$
such that $\Lambda^0=\A$ and $\Lambda^{k+1}=\mathrm{Span}\{a\dd\omega,\,a\in\A,\,\omega\in\Lambda^k\}$
for all $k\geq 0$. A differential calculus is \emph{finite-dimensional} with dimension $n\in\N$
if $\,\Lambda^k=0$ for all $k>n$. We further assume that $\Lambda^k$ is a free $\A$-bimodule of finite rank.

A graded involution on $\Lambda^\bullet$ is a degree zero map $*:
\Lambda^\bullet\to\Lambda^\bullet$ satisfying $(\omega^*)^*=\omega$ and 
$(\alpha\beta)^*=(-1)^{\mathrm{dg}(\alpha)\mathrm{dg}(\beta)}\beta^*\alpha^*$. 
If $\dd$ is a coboundary resp.~a graded derivation (i.e.~it satisfies the graded Leibniz rule),
then the map $\omega\mapsto \dd_*\omega:=(\dd\omega^*)^*$ is a coboundary resp.~a graded derivation.
A differential calculus is called \emph{real} or \emph{$*$-calculus} if there is a graded involution,
extending the involution of $\A=\Lambda^0$, such that $\dd_*=\dd$.

For a given choice of differential calculus $(\Lambda^\bullet,\dd)$ on a $*$-algebra $\A$,
we call a \emph{connection} on a right $\A$-module $\E$ a $\Bbbk$-linear map $\nabla:\E\otimes_{\A}\Lambda^\bullet\to
\E\otimes_{\A}\Lambda^{\bullet+1}$ satisfying the graded Leibniz rule:
\begin{equation}
\label{defc}
\nabla(\eta\omega)=(\nabla\eta)\omega+(-1)^{\mathrm{dg}(\eta)}\eta\,\dd\omega \;,
\end{equation}
for all $\eta\in\E\otimes_{\A}\Lambda^{{\rm dg}\,\eta}$ and $\omega\in\Lambda^\bullet$.
This generalizes the notion of linear connection of differential geometry.
Connections on left modules are defined in a similar way.
Let
\begin{equation}\label{eq:gamman}
\mathcal{R}^n=
\mathrm{Hom}_{\Lambda^\bullet}(\E\otimes_{\A}\Lambda^{\bullet}, \E\otimes_{\A}\Lambda^{\bullet+n})
\end{equation}
be the space 
of degree $n$ right $\Lambda^\bullet$-module endomorphisms of $\E\otimes_{\A}\Lambda^{\bullet}$.
Then $\mathcal{R}^\bullet:=\bigoplus_{n\geq 0}\mathcal{R}^n$ is a graded associative unital algebra, with product
given by the composition of endomorphisms, and $\mathcal{R}^0$ is a unital subalgebra isomorphic to
$\mathrm{End}_{\A}(\E)$. Notice that $\mathcal{R}^\bullet$ 
is a $\mathcal{R}^0$-bimodule, but in general it is not a right $\A$-module (unless the algebra is commutative).
From \eqref{defc} one sees \cite{Lod92} that a connection is completely determined by its restriction on $\E$ (and then extended by the Leibniz rule). 
It  follows that any two connections $\nabla$ and $\nabla'$ differ by an element
\mbox{$\nabla-\nabla'\in\mathcal{R}^1$}. 
If $\E$ is a projective $\A$-module --- that is $\E\,=\,p\,\A^k$ where $p\,:\,\A^k\,\to\,\A^k$ with $p^2\,=\,p\,=\,p^{\dagger}$ --- then one can see that connections exist and  that ${\rm Hom}_{\A}(\E, \E\otimes_{\A}\Lambda^1)\,\simeq\,\mathbb{M}_k(\A)\,\otimes_{\A}\Lambda^1$, so on them one can write  
\begin{equation}
\label{ccu}
\nabla\,=\,\nabla_0\,+\,\alpha
\end{equation}
where $\nabla_0\,=\,p\dd$ is the Grassmann (canonical) connection and   $\alpha\,\in\,\mathcal{R}^1$  can be represented as a matrix whose entries are elements in $\Lambda^1$. 
One also has that the map $\nabla^2\,=\,\nabla\,\circ\,\nabla\,:\,\E\,\otimes_{\A}\Lambda^{\bullet}\,\to\,\E\,\otimes_{\A}\Lambda^{\bullet+2}$ is $\Lambda^{\bullet}$-linear. Its restriction to $\E$, denoted by 
\begin{equation}
\label{dcm}
F\eta=\nabla^2\eta\qquad\forall\,\eta\in\E\;.
\end{equation}
is called the curvature of the connection $\nabla$.  When $\E$ is projective, the curvature $F$ is an element in $\mathbb{M}_{k}(\A)\otimes_{\A}\Lambda^2\,\simeq\, \mathcal{R}^2$. 
The connection is called
\emph{flat} if $F=0$ (e.g.~$\dd$ is a flat connection on each $\Lambda^k$).
As usual for graded algebras, we define the commutator of
$A\in\mathcal{R}^j$ and $B\in\mathcal{R}^k$ as
\begin{equation*}
[A,B]=A\circ B-(-1)^{jk}B\circ A
\end{equation*}
and extend it by bilinearity. If $A$ has odd degre, one easily checks
that $[A,\,.\,]$ is a graded derivation on $\mathcal{R}^\bullet$. Although
in general $\nabla$ is not an element of $\mathcal{R}^1$ (it is a degree $1$ endomorphism,
but it is not right $\Lambda^\bullet$-linear),
nevertheless the operator
\begin{equation*}
D=[\nabla,~]\,:\,\mathcal{R}^n\to\mathcal{R}^{n+1}
\end{equation*}
is well defined ($D(A)$ is
right $\Lambda^\bullet$-linear for any $A\in\mathcal{R}^\bullet$)
and it is a graded derivation on $\mathcal{R}^n$.
Within this formalism, the \emph{Bianchi identity}
\begin{equation}
\label{Bi}
D(F)=0
\end{equation}
is simply the statement that $[\nabla,F]=\nabla\circ\nabla^2-\nabla^2\circ\nabla=\nabla^3-\nabla^3=0$.

\subsection{Hermitian and self-dual modules}
\label{ss:hss}
We have seen how a connection -- and the corresponding curvature -- is defined on a module over an associative algebra, how it allows to introduce a covariant derivative so to have a consistent algebraic version of the Bianchi identity, which is the first relation in \eqref{clYM}. In order to introduce a Hodge duality operator we describe now self-dual modules.

A \emph{Hermitian structure} on a right $\A$-module $\E$ is a sesquilinear map
$(\,,\,)_{\E}:\E\times\E\to\A$ satisfying $\,(\eta a,\xi b)_{\E}=a^*(\eta,\xi)_{\E}b\,$ and
$\,(\eta,\xi)_{\E}^*=(\xi,\eta)_{\E}\,$ for all $\eta,\xi\in\E $ and $a,b\in\A$,
together with  $\,(\eta,\eta)_{\E}>0\,$ for all $\eta\neq 0$.
In case $\A$ is a pre $C^*$-algebra (here by this we simply mean that it is a dense $*$-subalgebra of some $C^*$-algebra), $\E$ will be called a right \emph{pre-Hilbert $\A$-module}. Every finitely generated projective module has a canonical pre-Hilbert module structure, that is unique modulo an equivalence.

If there is a Hermitian structure on $\E$, we call the connection $\nabla$ to be \emph{Hermitian} if
\begin{equation*}
(\nabla\eta,\xi)_{\E}+(-1)^{\mathrm{dg}(\eta)}(\eta,\nabla\xi)_{\E}=\dd(\eta,\xi)_{\E}
\end{equation*}
for all $\eta,\xi\in\E\otimes_{\A}\Lambda^\bullet$. Here the Hermitian structure is extended from $\E$ to $\E\otimes_{\A}\Lambda^\bullet$ in a natural way: if
$\eta,\eta'\in\E$, $\omega\in\Lambda^{j}$ and $\omega'\in\Lambda^{k}$ we define
\begin{equation*}
(\eta\otimes\omega,\eta'\otimes\omega')_{\E\otimes_{\A}\Lambda^\bullet}:=\omega^*\,(\eta,\eta')_{\E}\,\omega'\,\in\Lambda^{j+k} \;.
\end{equation*}
Compatibility with the Hermitian structure is a mild restriction on connections: while the set of connections on $\E$ is a complex affine space, the set of Hermitian connections is 
a real affine space. If the connection is given by $\nabla\,=\,\dd\,+\,\alpha$ (see equation \eqref{ccu}), then it is hermitian if and only if 
\begin{equation}
\label{herc}
\alpha^{\dagger}\,+\alpha\,=\,0
\end{equation}
in $\mathbb{M}_k(\A)\,\otimes_{\A}\Lambda^1$.

Let $\E$ a pre-Hilbert right $\A$-module.
The set $\E'=\mathrm{Hom}_{\A}(\E,\A)$ of right $\A$-module maps $\phi:\E\to\A$
is itself a right $\A$-module, with module structure $(\phi\cdot a)(\eta):=a^*\phi(\eta)$
for all $a\in\A$, called the \emph{dual module} of~$\E$.

Given two modules $\E$ and $\mathcal{F}$ with Hermitian structures
$(\,,\,)_{\E}$ and $(\,,\,)_{\mathcal{F}}$, a map
$T:\E\to\mathcal{F}$ is called \emph{adjointable} if $\exists\;
T^*:\mathcal{F}\to\E$ such that $(T^*\eta,\xi)_{\E}=(\eta,T\xi)_{\mathcal{F}}$
for all $\eta\in\mathcal{F}$ and $\xi\in\E$. An adjointable map is always
a right $\A$-module map, but the converse is not true.

\begin{lem}\label{lemma:sd}
For a pre-Hilbert right $\A$-module $\E$ the following are equivalent:
\begin{itemize}\itemsep=0pt
\item[i)] the right $\A$-module map $\E\to\E'$, $\eta\mapsto 
(\eta,\,\cdot\,)_{\E}$ is surjective;
\item[ii)] any right $\A$-module map $\E\to\A$ is adjointable.
\end{itemize}
Each one of these two equivalent conditions implies that any right
$\A$-module map $\E\to\mathcal{F}$ is adjointable for any pre-Hilbert right
$\A$-module $\mathcal{F}$.
\end{lem}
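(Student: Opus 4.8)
The plan rests on one observation that collapses (i) and (ii) into the same statement. Write $\Phi\colon\E\to\E'$ for the map $\Phi(\eta)=(\eta,\,\cdot\,)_\E$, and regard $\A$ as a right pre-Hilbert $\A$-module under its canonical structure $(a,b)_\A:=a^*b$. Using the sesquilinearity relation $(\eta a,\xi b)_\E=a^*(\eta,\xi)_\E\,b$, linearity in the second slot gives $(\eta,\xi a)_\E=(\eta,\xi)_\E\,a$, so each $\Phi(\eta)$ really is a right $\A$-module map $\E\to\A$, while conjugate-linearity in the first slot gives $(\eta a,\xi)_\E=a^*(\eta,\xi)_\E=(\Phi(\eta)\cdot a)(\xi)$, so $\Phi$ is itself a right $\A$-module map matching the dual structure $(\phi\cdot a)(\xi)=a^*\phi(\xi)$. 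The heart of the matter is then the claim: for $T\in\E'$, the map $T$ is adjointable (with respect to the canonical structure on $\A$) \emph{iff} $T=\Phi(\zeta)$ for some $\zeta\in\E$.

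Granting that claim, (i) says exactly that every $T\in\E'$ lies in the image of $\Phi$, which by the claim is equivalent to every $T\in\E'$ being adjointable, i.e. (ii). To prove the claim I would argue both directions explicitly. If $T=\Phi(\zeta)$, set $T^*\colon\A\to\E$, $T^*(a):=\zeta a$; then $(T^*a,\xi)_\E=(\zeta a,\xi)_\E=a^*(\zeta,\xi)_\E=a^*T(\xi)=(a,T\xi)_\A$, so $T^*$ is an adjoint. Conversely, if $T$ is adjointable with adjoint $T^*\colon\A\to\E$, then $\zeta:=T^*(1)$ satisfies $(\zeta,\xi)_\E=(T^*1,\xi)_\E=1^*T(\xi)=T(\xi)$, whence $T=\Phi(\zeta)$. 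This converse is the one place a unit of $\A$ is used; it is harmless here since the algebras of interest are unital, and I expect the only genuinely delicate point to be a reader who insists on the non-unital case, where an approximate-unit or completion argument would be required instead.

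For the final assertion, fix a pre-Hilbert right $\A$-module $\mathcal{F}$ and a right $\A$-module map $S\colon\E\to\mathcal{F}$; I will produce $S^*$ using (i). For each fixed $\eta\in\mathcal{F}$ the assignment $\xi\mapsto(\eta,S\xi)_{\mathcal{F}}$ is a right $\A$-module map $\E\to\A$, because $S(\xi a)=(S\xi)a$ and the form is linear in its second argument, so $(\eta,S(\xi a))_\mathcal{F}=(\eta,S\xi)_\mathcal{F}\,a$. By (i) it is represented as $(\zeta_\eta,\,\cdot\,)_\E$ for some $\zeta_\eta\in\E$, and $\zeta_\eta$ is \emph{unique}: if $(\zeta,\xi)_\E=(\zeta',\xi)_\E$ for all $\xi$ then, taking $\xi=\zeta-\zeta'$, positive-definiteness of $(\,,\,)_\E$ forces $\zeta=\zeta'$. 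Define $S^*(\eta):=\zeta_\eta$; the adjunction $(S^*\eta,\xi)_\E=(\eta,S\xi)_\mathcal{F}$ then holds by construction. It remains to check that $S^*$ is additive and satisfies $S^*(\eta a)=S^*(\eta)a$, and both follow from the same uniqueness: for instance $(S^*(\eta a),\xi)_\E=(\eta a,S\xi)_\mathcal{F}=a^*(\eta,S\xi)_\mathcal{F}=(S^*(\eta)a,\xi)_\E$ for all $\xi$ gives $S^*(\eta a)=S^*(\eta)a$. Thus $S$ is adjointable, completing the argument.
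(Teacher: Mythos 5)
Your proof is correct and, for the equivalence of (i) and (ii), follows essentially the same route as the paper: one direction evaluates the adjoint at the unit (your $\zeta:=T^*(1)$ is exactly the paper's $\eta=T^*(1)$), and the other defines the adjoint by right multiplication against the representing vector (your $T^*(a):=\zeta a$ matches the paper's $f\mapsto f\,\tau_T$, and your version is in fact notationally cleaner, since the paper is slightly loose about where the involution $f^*$ should appear). Where you go beyond the paper is the final assertion: the paper's proof stops once (i) $\Leftrightarrow$ (ii) is established and never argues that these conditions imply adjointability of an arbitrary right $\A$-module map $S:\E\to\mathcal{F}$, whereas you supply this, using (i) to represent the functional $\xi\mapsto(\eta,S\xi)_{\mathcal{F}}$ by a vector $\zeta_\eta$, positive-definiteness of the Hermitian structure to get uniqueness of $\zeta_\eta$, and that same uniqueness to verify right $\A$-linearity of $S^*$. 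This argument is correct and fills in a step the paper leaves to the reader. Your remark that unitality of $\A$ is the only delicate point is also accurate: both your proof and the paper's use the unit in the same place, namely to extract the representing vector from the adjoint.
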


\begin{proof}
Suppose that every right $\A$-module map $T:\,E\,\to\,\A$, i.e. any element $T\,\in\,\E^{\prime}$ is adjointable: there exists  a right $\A$-module map $T^*\,:\,\A\,\to\,\E$ such that the relation 
$$
(f, T(\varphi))_{\A}\,=\,(T^*(f),\varphi)_{\E}
$$
holds for any $f\,\in\,\A$ and $\varphi\,\in\,\E$. The lhs is defined as  $(f, T(\varphi))_{\A}\,=\,f^*T(\varphi)$, so for $f\,=\,1$ one has $T(\varphi)\,=\,(T^*(1),\varphi)_{\E}$, which proves i), provided one defines $\eta\,=\,T^*(1)$ in $\E$. 

If i) holds, then for any $T\,\in\,\E^{\prime}$ there exists an element $\tau_{T}\,\in\,\E$ such that $T(\varphi)\,=\,(\tau_T, \varphi)_{\E}$, and one can write $f\,T(\varphi)\,=\,(f\,\tau_T,\varphi)_{\E}$ for any $f\,\in\,\A$ and $\varphi\,\in\,\E$. The map $T^*$ is defined by $f\,\mapsto\,f\,\tau_T$.
\end{proof}

\noindent
A right module is \emph{self-dual} if it satisfies any one
of the equivalent conditions of Lemma \ref{lemma:sd}.
An adjointable map $\E\to\mc{F}$ is called a \emph{homomorphism} between the two pre-Hilbert modules, and the set of all such maps will be denoted $\mathrm{Hom}^*_{\A}(\E,\mc{F})$.
If $\E$ is self-dual, any right $\A$-module map is adjointable and hence a homomorphism.
Since the map $\E\to\E'$ is always injective ($(\eta,\eta)_{\E}=0$
implies $\eta=0$), a self-dual module is one that is isomorphic to its dual.

A Hermitian structure exists on any finitely generated projective module: if $\E=p\A^k$, with
$p=p^*=p^2\in M_k(\A)$ a projection, all Hermitian structure are equivalent
to the one obtained by restricting to $\E$ the standard Hermitian structure
on $\A^k$ given by
\begin{equation}\label{eq:hs}
(\eta,\xi)=\sum\nolimits_{i=1}^k\eta_i^*\xi_i
\end{equation}
for $\eta=(\eta_1,\ldots,\eta_k)$ and $\xi=(\xi_1,\ldots,\xi_k)\in\A^k$.

\begin{lem}
\label{lemsf}
The module $\E=p\A^k$ with Hermitian structure \eqref{eq:hs} is self-dual.
\end{lem}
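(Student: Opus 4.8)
The goal is to show that the finitely generated projective module $\E = p\A^k$, equipped with the restriction of the standard Hermitian structure \eqref{eq:hs}, is self-dual. By Lemma \ref{lemma:sd} it suffices to verify either condition i) or ii); I would work with condition i), namely that the canonical right $\A$-module map $\E \to \E'$, $\eta \mapsto (\eta,\,\cdot\,)_{\E}$, is surjective. Injectivity is automatic from positive-definiteness of the Hermitian structure (as noted after Lemma \ref{lemma:sd}), so establishing surjectivity is the whole content.

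First I would set up notation: write $\A^k$ with its standard basis $e_1,\dots,e_k$ and Hermitian structure \eqref{eq:hs}, and note that the dual $(\A^k)'=\mathrm{Hom}_\A(\A^k,\A)$ is itself free of rank $k$, with the dual-basis maps $\phi_i$ defined by $\phi_i(\xi)=\xi_i$. The crucial elementary fact is that for the \emph{free} module $\A^k$ the canonical map to its dual is already an isomorphism: given any $T\in(\A^k)'$, set $\eta=(\,T(e_1)^*,\dots,T(e_k)^*\,)\in\A^k$, and then $(\eta,\,\cdot\,)$ reproduces $T$ by $\A$-linearity on the generators. So the statement is true for free modules, and I must descend this along the projection $p$.

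The key step is to transfer surjectivity from $\A^k$ to $\E=p\A^k$ using that $p=p^*=p^2$ is an orthogonal projection for \eqref{eq:hs}. Concretely, take an arbitrary $S\in\E'=\mathrm{Hom}_\A(\E,\A)$. Precomposing with the inclusion (or with $p$ viewed as a surjection $\A^k\to\E$) yields $S\circ p\in(\A^k)'$, which by the free case is represented by some $\zeta\in\A^k$, so $S(p\xi)=(\zeta,p\xi)$ for all $\xi$. I would then set $\eta:=p\zeta\in\E$ and check, using $p^*=p=p^2$ together with the compatibility $(\eta a,\xi b)_\E=a^*(\eta,\xi)_\E b$, that $(\eta,\,\cdot\,)_{\E}=S$ on all of $\E$; the identity $(p\zeta,\,p\xi)=(\zeta,\,p^2\xi)=(\zeta,\,p\xi)$ is exactly what makes $\eta=p\zeta$ land in $\E$ while still representing $S$. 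This exhibits $S$ as $(\eta,\,\cdot\,)_\E$ with $\eta\in\E$, giving surjectivity.

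I expect the main (though modest) obstacle to be bookkeeping rather than anything deep: one must be careful that the Hermitian form is conjugate-linear in the first slot, so the passage from a module map $T$ to its representing vector involves the involution $*$ in the right places, matching the dual-module convention $(\phi\cdot a)(\eta)=a^*\phi(\eta)$ introduced before Lemma \ref{lemma:sd}. The only genuine structural inputs are self-adjointness and idempotency of $p$ and positive-definiteness of \eqref{eq:hs}; everything else is linear algebra over $\A$. Once surjectivity of $\eta\mapsto(\eta,\,\cdot\,)_\E$ is established, self-duality follows by definition, completing the proof.
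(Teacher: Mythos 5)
Your proposal is correct and follows essentially the same route as the paper's proof: both reduce to the free module $\A^k$ (where the representing vector is built explicitly as $\bigl(T(e_1)^*,\dots,T(e_k)^*\bigr)$, using the involution to account for conjugate-linearity in the first slot), and both descend to $\E=p\A^k$ by extending a functional along $p$ and then replacing its representative $\zeta$ by $p\zeta$ via the identity $(p\zeta,p\xi)=(\zeta,p\xi)$, which rests on $p=p^*=p^2$ and the adjoint property \eqref{eq:canhs} of the standard structure \eqref{eq:hs}. The only difference is the order of the two steps, which is immaterial.
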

\begin{proof}
Any $T\in\E'$ can be extended to a right $\A$-module map $T:\A^k\to\A$ by
declaring $T$ equal to zero on the complement $(1-p)\A^k$.
Assume $\A^k$ is self-dual, then $\exists\;\tau_T\in\A^k$ such that $T(\xi)=(\tau_T,\xi)$
for all $\xi$. Since \eqref{eq:hs} satisfies
\begin{equation}\label{eq:canhs}
(\tau_T,A\xi)=(A^*\tau_T,\xi)\;,\qquad \text{for any}\;A\in \mathbb{M}_k(\A)\;,
\end{equation}
then the restriction to $\E$ gives $T(p\xi)=(p\tau_T,p\xi)$ for any $p\xi\in\E$. 
Since $p\tau_T\in\E$, the module $\E$ is self-dual.
It is then enough to prove the lemma when $\E=\A^k$.

Any right $\A$-module
map $T:\A^k\to\A$ is of the form $T(\xi)=\tau_1a_1+\ldots+\tau_ka_k$, $\xi=(a_1,\ldots,a_k)\,\in\,\A^k$,
where $\tau_i\in\A$ is the image of the $i$-th element in the canonical basis of $\C^k$.
Called $\tau_T=(\tau_1^*,\ldots,\tau_k^*)$, then $T(\xi)=(\tau_T,\xi)$,
which concludes the proof.
\end{proof}

\noindent
Notice that a generating family $\{v_i\}_{i=1}^k$ for $p\A^k$
is given by the columns of the projection $p$.
By linearity, Hermitian structures on $p\A^k$ are uniquely
determined by the positive element $g=(g_{ij})\in M_k(\A)$ given by
$g_{ij}=(v_i,v_j)$. Since by construction $v_j=\sum_{i=1}^kv_ip_{ij}$,
the matrix $g$ satisfies $pg=gp=g$. The canonical Hermitian structure
corresponds to the choice $g=p$.

\subsection{The Hodge star operator}\label{sec:1.4.2}
Given the associative $*$-algebra $\A$ equipped with a $n$-dimensional differential calculus  $(\Lambda^\bullet,\dd)$, we have assumed that $\Lambda^k$
are right $\A$-modules. In order to define a right module map $\ast_H:\Lambda^k\to\Lambda^{n-k}$ analogue to the
Hodge star operator, a necessary condition is that $\Lambda^n$ is a rank $1$ free right module, with basis given by an element $\tau$ that we will call the \emph{volume form}, and that, as free modules, ${\rm rank}\,\Lambda^N\,=\,{\rm rank}\,\Lambda^{N-k}$.
Since $\Lambda^k$ is a free (right, say) $\A$-module,  from the previous lemma \ref{lemsf} we know that, via the  equation \eqref{eq:hs}
referred to a chosen fixed basis, $\Lambda^k$ has an Hermitian structure $(~,~)_k$ by which it turns to be self-dual. 
Via the volume form we may write 
the Hermitian structure  on $\Lambda^0$ as $(a,b)_0=a^*b\;\forall\;a,b\in\A$, while on 
$\Lambda^n\,\simeq\,\A$ we may similarly write $(\tau a,\tau b)_n=a^*b\;\forall\;a,b\in\A$.
Notice that by adjunction, $\tau^*$ is a basis of $\Lambda^n$ as a left module.
We would like then to define $\ast_H$ on arbitrary (complex) forms by the formula
\begin{equation}\label{eq:2.2}
\tau\,(\,\ast_H\hspace{1pt}\alpha,\beta)_{n-k}=\alpha^*\wedge\beta \;,
\end{equation}
for each $\alpha\in\Lambda^k$ and $\beta\in\Lambda^{n-k}$ (here we consider complex
algebras and complex forms, and have denoted via the standard $\wedge$ the product between forms). The following proposition holds. 
\begin{prop}\label{prop:list}~
\begin{enumerate}[i)] 
\item  There exists,  and it is unique,  a linear map $\ast_H:\Lambda^k\to\Lambda^{n-k}$
satisfying \eqref{eq:2.2}.

\item It is $\,\ast_H(1)=\tau\,$ and $\,\ast_H(\tau^{*})=1\,$.

\item It is $\ast_H$ is a right $\A$-module map if and only if $\tau$ is central
(i.e.~$\Lambda^n\simeq\A$ as a bimodule).

\end{enumerate}
\end{prop}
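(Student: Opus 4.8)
The plan is to treat the three claims in turn, building everything on the self-duality of the free modules $\Lambda^k$ established in Lemma~\ref{lemsf}. For part~(i), I would fix $\alpha\in\Lambda^k$ and regard the assignment $\beta\mapsto\alpha^*\wedge\beta$ as a map $\Lambda^{n-k}\to\Lambda^n$. Since $\Lambda^n$ is free of rank one with basis $\tau$, composing with the trivialization $\Lambda^n\simeq\A$, $\tau a\mapsto a$, turns this into a functional $\phi_\alpha\in(\Lambda^{n-k})'$; the identity $\alpha^*\wedge(\beta b)=(\alpha^*\wedge\beta)b$ shows $\phi_\alpha$ is right $\A$-linear. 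By self-duality there is then a unique $\gamma_\alpha\in\Lambda^{n-k}$ with $\phi_\alpha(\beta)=(\gamma_\alpha,\beta)_{n-k}$, and I would set $\ast_H\alpha:=\gamma_\alpha$; unwinding the trivialization recovers exactly \eqref{eq:2.2}. Uniqueness of the representing element, together with positivity of the Hermitian form (so that $(\gamma,\beta)_{n-k}=0$ for all $\beta$ forces $\gamma=0$), gives both the uniqueness of $\ast_H\alpha$ and, by comparing the defining relations for $\alpha_1+\alpha_2$ and for $\lambda\alpha$ and invoking sesquilinearity, the $\Bbbk$-linearity of $\ast_H$.

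For part~(ii) I would simply evaluate \eqref{eq:2.2} on the relevant basis vectors, using $(\tau a,\tau b)_n=a^*b$ and $(a,b)_0=a^*b$. Taking $\alpha=1\in\Lambda^0$ gives $\tau(\ast_H 1,\beta)_n=\beta$ for every $\beta\in\Lambda^n$; writing $\ast_H 1=\tau c$ and $\beta=\tau b$ reduces this to $c^*b=b$ for all $b$, whence $c=1$ and $\ast_H 1=\tau$. Symmetrically, taking $\alpha=\tau^*\in\Lambda^n$ gives $\tau(\ast_H\tau^*,\beta)_0=\tau\wedge\beta=\tau\beta$ for $\beta\in\Lambda^0$; since $(\ast_H\tau^*,\beta)_0=(\ast_H\tau^*)^*\beta$ and $\tau$ is a free basis, this forces $(\ast_H\tau^*)^*=1$, i.e.\ $\ast_H\tau^*=1$.

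Part~(iii) is where the real content lies, and it is the step I expect to be most delicate. For the forward direction, assuming $\tau$ central, I would compute $\tau(\ast_H(\alpha a),\beta)_{n-k}=(\alpha a)^*\wedge\beta=a^*(\alpha^*\wedge\beta)=a^*\tau(\ast_H\alpha,\beta)_{n-k}$, then push $a^*$ through $\tau$ by centrality and rewrite $a^*(\ast_H\alpha,\beta)_{n-k}=((\ast_H\alpha)a,\beta)_{n-k}$ using sesquilinearity; the non-degeneracy from part~(i) then yields $\ast_H(\alpha a)=(\ast_H\alpha)a$. The converse is the subtle point: from $\ast_H$ being right $\A$-linear and $\ast_H 1=\tau$ one gets $\ast_H a=\tau a$ for $a\in\Lambda^0$, while evaluating \eqref{eq:2.2} directly at $\alpha=a$ gives (writing $\ast_H a=\tau c$ and $\beta=\tau b$) the relation $\tau c^*=a^*\tau$. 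Comparing the two expressions forces $c=a$ and hence $\tau a^*=a^*\tau$ for every $a\in\A$; as $a\mapsto a^*$ is onto, this is precisely the centrality of $\tau$, so $\Lambda^n\simeq\A$ as a bimodule. The main obstacle throughout is handling the interaction between the involution, the sesquilinearity of $(\,,\,)_{n-k}$, and the side on which $\tau$ and the algebra elements act; keeping these straight is what makes the centrality condition in~(iii) emerge cleanly.
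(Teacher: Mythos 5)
Your proof is correct and follows essentially the same route as the paper's: part (i) via self-duality of $\Lambda^{n-k}$ and right $\A$-linearity of $\beta\mapsto\alpha^*\wedge\beta$, part (ii) by evaluating \eqref{eq:2.2} on $1$ and $\tau^*$, and part (iii) by isolating the commutator $[a^*,\tau]$ from the defining relation. The only (immaterial) difference is that in the converse of (iii) you specialize to degree-zero forms and use $\ast_H 1=\tau$, whereas the paper plugs $\alpha=\tau^*$, $\beta=1$ into its identity \eqref{eq:propiii}; both specializations force $[a^*,\tau]=0$ for all $a\in\A$.
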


\begin{proof}
One has:
\begin{enumerate}[i)]
\item Since $\Lambda^n$ is a rank $1$ free right module and $\Lambda^k$ is
selfdual, from Lemma~\ref{lemma:sd} the linear map $\Lambda^k\to\mathrm{Hom}_{\A}(\Lambda^k,\Lambda^n)$,
$\eta\mapsto\tau\,(\eta,\,\cdot)_k$, is surjective.
Fixed $\alpha\in\Lambda^k$, the map $\varphi_\alpha:\Lambda^{n-k}\to\Lambda^n$ given by 
$\varphi_\alpha(\beta):=\alpha^*\wedge\beta$ is a right $\A$-module map,
i.e.~$\varphi_\alpha\in\mathrm{Hom}_{\A}(\Lambda^{n-k},\Lambda^n)$.
Hence, there exists and is unique an $\eta\in\Lambda^{n-k}$ such that 
$\varphi_\alpha(\beta)=\tau\,(\eta,\beta)_{n-k}$. We define  $\ast_H\alpha=\eta$
and notice that it satisfies \eqref{eq:2.2} by construction.
\item From \eqref{eq:2.2} we get $(\ast_H(1),\tau)_n=1$ and $(\ast_H(\tau^*),1)_0=1$. This implies ii).
\item For all $a\,\in\,\A,\,\alpha\,\in\,\Lambda^k,\,\beta\,\in\,\Lambda^{n-k}$:
\begin{equation}\label{eq:propiii}
\tau\,\bigl(\ast_H(\alpha a)-(\ast_H\alpha)a,\beta\bigr)_{n-k}=
[a^*,\tau](\ast_H\alpha,\beta)_{n-k} \;.
\end{equation}
The left hand side is zero if and only if, by non-degeneracy
of the Hermitian structure, $\ast_H(\alpha a)=(\ast_H\alpha)a$ for all $a,\alpha$,
i.e.~$\ast_H$ is a right module map. Choosing $\alpha=\tau^*$ and
$\beta=1$, then $(\ast_H\alpha,\beta)_0=1$ and the right hand side of
\eqref{eq:propiii} vanishes if and only if $[a^*,\tau]=0$ for all
$a^*\in\A$. Therefore, $\ast_H$ is a right module map if and only if  $\tau$
is central.
\end{enumerate}
\end{proof}

\noindent Left multiplication by $\A$ gives a map $\pi:\A\to\mathrm{End}_\A(\Lambda^\bullet)$.
In fact, since we are assuming that $\Lambda^\bullet$ is selfdual, every right $\A$-linear
endomorphism is adjointable, and $\pi$ maps to the $*$-algebra $\mathrm{End}^*_\A(\Lambda^\bullet)$.

\begin{prop}
$\ast_H$ is a left $\A$-module map if and only if $\pi$ is a $*$-representation.
\end{prop}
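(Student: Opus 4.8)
The plan is to reduce the two-sided statement to a single unconditional identity relating $\ast_H(a\alpha)$ and $a\,\ast_H\alpha$, extracted directly from the defining relation \eqref{eq:2.2}. Recall that, by self-duality, every $\pi(a)\in\End_\A(\Lambda^\bullet)$ is adjointable, so $\pi(a)^*$ makes sense, and $\pi$ is a $*$-representation precisely when $\pi(a^*)=\pi(a)^*$, equivalently $\pi(a^*)^*=\pi(a)$, for all $a\in\A$. First I would compute, for $a\in\A$, $\alpha\in\Lambda^k$ and $\beta\in\Lambda^{n-k}$, the volume-form pairing $\tau\,(\ast_H(a\alpha),\beta)_{n-k}$. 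By \eqref{eq:2.2} it equals $(a\alpha)^*\wedge\beta$; since $a$ has degree zero the graded involution gives $(a\alpha)^*=\alpha^*a^*$, and associativity of the wedge product turns $\alpha^*a^*\wedge\beta$ into $\alpha^*\wedge(a^*\beta)=\alpha^*\wedge\bigl(\pi(a^*)\beta\bigr)$. Applying \eqref{eq:2.2} once more, now to $\alpha$ and the form $\pi(a^*)\beta\in\Lambda^{n-k}$, yields
\[
\tau\,(\ast_H(a\alpha),\beta)_{n-k}=\tau\,(\ast_H\alpha,\pi(a^*)\beta)_{n-k}\;.
\]
Because $\tau$ freely generates $\Lambda^n$ I may cancel it, and then use adjointability of $\pi(a^*)$ to move it into the first slot, obtaining for all $\beta$ the equality $(\ast_H(a\alpha),\beta)_{n-k}=(\pi(a^*)^*\ast_H\alpha,\beta)_{n-k}$. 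Non-degeneracy of the Hermitian structure (the map $\eta\mapsto(\eta,\cdot)$ is injective, as recorded around Lemma~\ref{lemma:sd}) then gives the key identity
\[
\ast_H(a\alpha)=\pi(a^*)^*\,\ast_H\alpha\;,\qquad a\in\A,\ \alpha\in\Lambda^k\;.
\]

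With this in hand the equivalence is immediate to organize. Since $a\,\ast_H\alpha=\pi(a)\ast_H\alpha$, the map $\ast_H$ is left $\A$-linear in degree $k$ if and only if $\pi(a)\ast_H\alpha=\pi(a^*)^*\ast_H\alpha$ for all $a$ and all $\alpha\in\Lambda^k$. If $\pi$ is a $*$-representation then $\pi(a^*)^*=\pi(a)$ and this holds trivially, giving left-linearity in every degree; this direction requires nothing further. For the converse I would feed the identity back: left-linearity forces $\bigl(\pi(a)-\pi(a^*)^*\bigr)\ast_H\alpha=0$ for every $\alpha$, that is $\pi(a)=\pi(a^*)^*$ on the image of $\ast_H$. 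Letting $k$ range over all degrees, so that $n-k$ sweeps every graded piece, and invoking surjectivity of $\ast_H$, one concludes $\pi(a)=\pi(a^*)^*$ on all of $\Lambda^\bullet$, i.e. $\pi(a^*)=\pi(a)^*$, which is the $*$-representation property.

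The only genuine point — and the step I expect to be the main obstacle — is the surjectivity of $\ast_H:\Lambda^k\to\Lambda^{n-k}$ used in the converse: the construction in Proposition~\ref{prop:list} only delivers existence and uniqueness of $\ast_H\alpha$, not that it sweeps out all of $\Lambda^{n-k}$. This is exactly the requirement that the wedge pairing $\Lambda^k\times\Lambda^{n-k}\to\Lambda^n$ be non-degenerate, equivalently that $\ast_H$ be invertible (the algebraic counterpart of $\ast_H^2=\pm\,\mathrm{id}$). I would either invoke this as a standing non-degeneracy hypothesis on the calculus, or establish bijectivity of $\ast_H$ directly from the equal-rank assumption $\mathrm{rank}\,\Lambda^k=\mathrm{rank}\,\Lambda^{n-k}$ together with self-duality; note that the injectivity of $\ast_H$ rests on the same non-degeneracy, so the two facts come as a package. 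The forward implication, by contrast, is robust and uses no such hypothesis.
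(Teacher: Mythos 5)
Your proof is correct and follows essentially the same route as the paper: the identical computation from \eqref{eq:2.2} giving $\tau\,(\ast_H(a\alpha),\beta)_{n-k}=\alpha^*a^*\wedge\beta=\tau\,(\ast_H\alpha,a^*\beta)_{n-k}$, then cancellation of $\tau$, non-degeneracy, and adjointability (guaranteed by self-duality) to convert left-linearity into the adjoint relation $\pi(a^*)=\pi(a)^*$. The surjectivity of $\ast_H:\Lambda^k\to\Lambda^{n-k}$ that you flag as the main obstacle in the converse is real, but note that the paper's own proof passes directly from the pairing identity to the ``if and only if'' without addressing it either --- it silently assumes that the image of $\ast_H$ generates $\Lambda^{n-k}$ as a right $\A$-module --- so your explicit identification of this hypothesis makes your argument, if anything, more careful than the original.
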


\begin{proof}
For all $a\,\in\,\A,\,\alpha\,\in\,\Lambda^k,\,\beta\,\in\,\Lambda^{n-k}$ one has:
\begin{equation*}
\tau\,\bigl(\ast_H(a\alpha),\beta\bigr)_{n-k}=
\alpha^*a^*\wedge\beta=
\tau\,(\ast_H\alpha,a^*\beta)_{n-k} \;,
\end{equation*}
that is $\bigl(\ast_H(a\alpha),\beta\bigr)_{n-k}=(\ast_H\alpha,a^*\beta)_{n-k}$.
Thus $\ast_H(a\alpha)=a\cdot {\ast_H}(\alpha)$ for all $a,\alpha$ if and only if,
for all $a\in\A$, $\pi(a)$ is an adjointable endomorphism with adjoint $\pi(a^*)$.
That is, $\pi:\A\to\mathrm{End}^*_\A(\Lambda^\bullet)$ is a $*$-representation.
\end{proof}

\noindent How is the duality map defined via the relation \eqref{eq:2.2} related to the Hodge duality operator defined in the setting of differential geometry?  With $(M,g)$ a  $N$-dimensional smooth manifold equipped with a metric tensor $g$, let $g\,=\,\sum_{ab}g^{ab}\,X_a\otimes X_b$ be the expression of the metrics in a coordinate chart system $(x^{i})_{i\,=\,1,\ldots,N}$, where $(\Lambda^{\bullet}, \dd)$ is the standard  $N$-dimensional exterior  differential calculus with $\Lambda^{N-k}\,\simeq\,\Lambda^{k}\,\simeq\,\A^{k(k-1)/2}$ as free $\A$-bimodules and the vector fields $X^a$ give the dual basis  to the 1-form basis $\dd x^a$, i.e. $i_{X_a}\dd x^b\,=\,\delta_a^b$. Define 
\begin{equation}
\label{clhs}
(\dd x^{a_1}\wedge\ldots\wedge\dd x^{a_k}, \dd x^{b_1}\wedge\ldots\wedge\dd x^{b_{k}})_k\,=\,
\frac{1}{k!}\,\sum_{\pi\,\in\,S_k}\,\sum_{\pi^\prime\,\in\,S_k} g^{\pi(a_1)\,\pi^{\prime}(b_1)}\,\cdots\,g^{\pi(a_k)\,\pi^{\prime}(b_k)}
\end{equation}
where $S_k$ is the permutation group of $k$ elements.  Assume the volume form is $\tau\,=\,\dd x^1\wedge\ldots\wedge\dd x^N$. One can then easily see that the duality operator $\ast_{H}$ \eqref{eq:2.2} corresponding to such a hermitian structure is the well known Hodge operator on $\Lambda^k$. It is moreover possible to prove that, if $T\,=\,\sum_{ab}T^{ab}X_a\otimes X_b$ is a contravariant tensor on the same coordinate chart, and one defines in terms of the components of $T$ a  hermitian product which is analogue to \eqref{clhs}, then the duality operator $\ast_T\,:\,\Lambda^{k}\,\to\,\Lambda^{N-k}$ (defined via \eqref{eq:2.2}) satisfies the identity $\ast_T^2\,=({\rm sgn}\, T)\,(-1)^{k(N-k)}$ -- with ${\rm sgn}\, T$ the signature of $T^{ab}$ -- if and only if it is symmetric (i.e. $T^{ab}\,=\,T^{ba})$ and invertible (i.e. $\det\,T^{ab}\,\neq\,0$). Moreover, one can see that the tensor $T$ is real (i.e. $T^{ab*}\,=\,T^{ab}$) if and only if $\ast_H(\xi^*)\,=\,(\ast_H\xi)^*$.

\subsection{The Yang-Mills equations}

Recall the general analysis performed in \ref{dgac}. In order to be able to introduce the analogue of the equation \eqref{clYM}, one needs to define the action of the Hodge duality operator on the curvature of a given connection, and they are given (see \eqref{dcm}) by elements in $\mathbb{M}_k(\A)\otimes_{\A}\Lambda^2$ for a projective module $\E\,=\,p\,\A^k$.   We set (adopting the same symbol, with a slight abuse of notation)
\begin{equation}
\label{cur}
\ast_H\,:\,\mathbb{M}_k(\A)\otimes_{\A}\Lambda^2
\,\xrightarrow{\;\id\,\otimes_{\A}\text{ $\ast_H$}}
\mathbb{M}_k(\A)\otimes_{\A}\Lambda^{n-2},
\end{equation}
indeed naturally extending the action of the Hodge duality previously presented. Notice that this definition is meaningful even if $\ast_H$ is not a bimodule map.
Following such a definition, we define the (homogeneous) Yang-Mills equation as
\begin{equation}
\label{qYM}
D(\ast_H\,F)=0 \;.
\end{equation}
If $n=4$ and $F$ is an eigenvector of $\ast_H$, the above equation is  automatically satisfied, due to the Bianchi identity \eqref{Bi}.

\section{Solutions of the Yang-Mills equations on $\bS^3$}
\label{sec:YMS3}
As a manifold $M$ consider the 3-sphere $\bS^3$. Since it is a group manifold, $SU(2)\,\simeq\,\bS^3$, its exterior algebra is globally defined. If we set $\A\,=\,C^{\infty}(\bS^3)$ to be the algebra of smooth complex valued functions on $\bS^3$, $\Lambda^1$ is a free $\A$-bimodule of rank 3 and  has a basis given by $\{\omega_a\}_{a\,=\,1,\ldots,3}$, where the complex conjugation acts as $\omega_a\,=\,\omega_a^*$. Such 1 forms are  dual to the set of vector fields $\{X_{a}\}_{a\,=\,1,\ldots,3}$  which give a realization of the Lie algebra $\mathfrak{su(2)}$, 
\begin{equation}
\label{geg}
[X_a, X_b]\,=\,\varepsilon_{abc}\, X_c. 
\end{equation}
The action of the exterior derivative on $\Lambda^{\bullet}$ is completely characterized by the Leibniz rule and the Maurer-Cartan relation
\begin{equation}
\label{mccl} 
\dd\,\omega_a\,=\,-\,\frac{1}{2}\,\varepsilon_{abc}\,\omega_{b}\wedge\omega_c
\end{equation}
On this manifold we consider the \emph{round} metrics -- the one induced by the euclidean metrics on $\R^4$ --  that is $g\,=\,\sum_a\omega_a\otimes\omega_a$ which coincides with the Cartan-Killing metrics on the dual of the Lie algebra. If the  volume form is chosen to be  $\tau\,=\,\omega_1\wedge\omega_2\wedge\omega_3$, the Hodge duality operator corresponding to \eqref{clhs} turns out to be the $\A$-bimodule map given on a basis of $\Lambda^{\bullet}$ by:
\begin{align}
\ast_H(1)\,=\,\tau&\qquad\qquad\ast_H(\tau)\,=\,1\nonumber \\
\ast_H(\omega_a)\,=\,\frac{1}{2}\,\varepsilon_{abc}\omega_b\wedge\omega_c,& \qquad\qquad\ast_H(\omega_a\wedge\omega_b)\,=\,\varepsilon_{abc}\omega_c
\label{Hcl}
\end{align}
It is possible to see (see \cite{weo}) that any vector bundle $\pi\,:\,E\,\to\,\bS^3$ is equivalent to a trivial one, so its space of section is a free module over $\A$. We consider then the rank two free bimodule $\E\,=\,\A^{2}$. Given the canonical hermitian structure on $\E$,  from the equation \eqref{herc} we know that the set of hermitian connections
$\mathcal{R}^1\,\simeq\,{\rm End}_{\A}\otimes_{\A}\,\Lambda^1$ (notice that this equivalence is valid only if $\A$ is commutative) can be written as
\begin{equation}
\label{r1c}
\mathcal{R}^1\,\ni\,\alpha\,=\,\sum_j\alpha_j\,\otimes\,T_j
\end{equation}
where $\alpha_j\,=\,\alpha_j^*\,\in\,\Lambda^1$ and $T_j\,=\,-T_j^*$ are the matrices giving the elements of the Lie algebra $\mathfrak{su(2)}$ in the fundamental representation:
\begin{equation}
T_1\,=\,\frac{1}{2}\,\left(\begin{array}{cc} 0 & i \\ i & 0\end{array}\right), \qquad
T_2\,=\,\frac{1}{2}\,\left(\begin{array}{cc} 0 & -1 \\ 1 & 0\end{array}\right), \qquad
T_3\,=\,\frac{1}{2}\,\left(\begin{array}{cc} i & 0 \\ 0 & -i\end{array}\right). \qquad
\label{Xge}
\end{equation}
In particular, we select 
\begin{equation}
\alpha\,=\,\lambda\,\omega_a\,\otimes\,X_a
\label{calpha}
\end{equation}
 with $\lambda\,\in\,\R$.  From  \eqref{dcm}, \eqref{cur}, \eqref{Hcl}, \eqref{mccl} one has:
\begin{align}
&F\,=\,\frac{\lambda}{2}\,(\lambda\,-\,1)\,\epsilon_{abc}\,\omega_a\,\wedge\,\omega_b\,\otimes\,X_c\,=\,\lambda\,(1\,-\,\lambda)\,\dd\omega_a\,\otimes\,X_a \label{ccu}\\
&\ast_H(F)\,=\,(\lambda\,-\,1)\,\alpha  \label{eqi}\\ 
&D(\ast_H \,F)\,=\,\lambda\,(\lambda\,-\,1)\,(\lambda\,-\,\frac{1}{2}\,)\,\epsilon_{abc}\,\omega_a\,\wedge\,\omega_b\,\otimes\,X_c \label{yms}
\end{align}
This shows that, except the trivial cases $\lambda\,=\,0,1$, the choice $\lambda\,=\,1/2$ gives a non trivial solution to the Yang-Mills equations. The previous result depends only on the commutation relation of the Lie algebra $\mathfrak{su(2)}$ and the Hodge duality. 

One can check that for $\lambda=1/2$ the connection in \eqref{calpha} has zero torsion, and is then the Levi-Civita connection, consistently with \cite{Lan86}. It also coincides with the canonical connection of $\bS^3$ as a $SU(2)\times SU(2)$-symmetric space (\cite{KN69}, \S11.3), that is known to solve the Yang-Mills equation \cite{HTS80}. The correspondence between canonical connection and Levi-Civita connection on a Riemannian symmetric space is also a general property \cite[Thm.~3.3]{KN69}.

Such a solution of the Yang-Mills equation on $\bS^3$ can give a non self-dual solution of the Yang-Mills equation on a 4-dimensional manifold. We adopt homogeneous coordinates on $SU(2)$, 
\begin{equation} 
\label{gpu}
SU(2)\,\ni\,g\,=\,\left(\begin{array}{cc} u & -\bar{v} \\ v & \bar{u}\end{array} \right), \qquad\qquad\bar{u}u\,+\,\bar{v}v\,=\,1:
\end{equation}
it is immediate to check that
$$
\alpha\,=\,\lambda\,g^{-1}\dd\,g\,=\,\lambda\,\omega_{a}\,\otimes\,X_a.
$$
If we now consider the projection 
\begin{equation}
\label{proc}
\pi\,:\,\R^4\backslash\{0\}\,\to\,\bS^3\,\qquad\qquad\mathrm{with}\qquad\qquad x_{\mu}\,\mapsto\,\frac{x_{\mu}}{||x||}, 
\end{equation}
(where we  introduced real coordinates $x_{\mu}\,=\,x_{\mu}^*,\,\,\mu\,=\,0,\ldots,3$ and set $u\,=\,x_0\,+\,i\,x_3$ and $v\,=\,x_2\,+\,i\,x_1$, while $||x||^2\,=\,x_{\mu}x_{\mu}$ comes from the euclidean metric on $\R^4$)  we get 
\begin{equation}
\label{pbf}
\pi^{*}(g^{-1})\,=\,\frac{1}{||x||}\,(x_0\,1\,-\,i\,\sigma_a\,x_a),\qquad\quad \pi^*(g)\,=\,\frac{1}{||x||}\,(x_0\,1\,+\,i\,\sigma_a\,x_a)
\end{equation}
where the Pauli matrices are given by $\sigma_a\,=\,-2\,i\,T_a$ and
\begin{equation}
\label{cu4}
\pi^*(F)\,=\,(1\,-\,\lambda)\dd\,A\,=\,\lambda(\lambda\,-\,1)\pi^*(\dd\,g^{-1}\,\wedge\,\dd\,g).
\end{equation} 
An explicit computation shows that, if $\lambda\,=\,1/2$, such a curvature solves the Yang-Mills equation on $\R^4\backslash\{0\}$ with the euclidean metric. This solution is the one presented in \cite{DFF76} as meron.
A simple expression for the meron curvature can be derived as follows.
Let 
$\pi^*(g^{-1}\dd g)\,=\,A_\mu \dd x_\mu$ and $\pi^*(F)\,=\,F_{\mu\nu}\dd x_{\mu}\wedge x_{\nu}$ with $F_{\mu\nu}=\partial_\mu A_\nu-\partial_\nu A_\mu+[A_\mu,A_\nu]$.
Since $F_{\mu\nu}=0$
for $\lambda=1$, $\partial_\mu A_\nu-\partial_\nu A_\mu=-[A_\mu,A_\nu]$
and $F_{\mu\nu}\,=\,\lambda(\lambda-1)[A_\mu,A_\nu]$, that is
\begin{equation*}
F_{\mu\nu}=
\frac{\lambda(\lambda-1)}{||x||^4}\big[\Sigma_{\mu\alpha}x_\alpha,\Sigma_{\nu\beta}x_\beta\big]
\end{equation*}
with $\Sigma_{\mu\nu}=\tau^*_\mu\tau_\nu-\delta_{\mu\nu}$
where 
$\tau_0\,=\,\openone$ and $-i\tau_j$ are the three Pauli matrices.

As one can see from the explicit expression of $F_{\mu\nu}$,
that there is a singularity in zero. One can produce other solutions, with two singularities, using a conformal transformation (the two singularities comes from
$0$ and the point at infinity). One can also check that the -- unlike the instanton solution, defined on the whole $\R^4$ -- action functional
is logarithmically divergent and can be extended to the compactification $\bS^4$, where it has a finite action.

\subsection{A digression: where does the $1/2$ factor come from?}
To clarify what we have only quoted some lines above, consider 
the  principal bundle with total space $P\,=\,G\,\times\,G^{\prime}$, gauge group $H$ with $G\,\sim\,G^{\prime}\,\sim\,H\,=\,{\rm SU(2)}$  and  right principal action $r\,:\,(G\,\times\,G^{\prime},\,H)\,\to\,(G\,\times\,G^{\prime})$ given by:
\beq
r\,:\,(g\,\times\,g^{\prime},\,h)\,\mapsto\,h^{-1}g\,\times\,g^{\prime}\,h.
\label{rpa}
\eeq
For the group $G$ (resp. $G^{\prime}$)  we consider both the set of left invariant vector fields and 1-forms $(L_a,\omega_a)$ (resp. $L_a^{\prime}, \omega_{a}^{\prime})$) and the set of right invariant vector fields and 1-forms $(R_a, \eta_a)$ (resp. $(R^{\prime}_a, \eta_{a}^{\prime})$).   The infinitesimal action of the gauge group is generated by the vertical fields of the fibration, which turn out to be
\beq
V_{a}\,=\,-R_a\,+\,L^{\prime}_a
\label{vfi}
\eeq for $a\,=\,1,\ldots,3$.  On $P$ we consider the gauge invariant Riemannian metrics 
\beq
\label{metrics}
\mathrm{g}\,=\,\sum_{a\,=\,1}^{N}\,(\eta_a\otimes\eta_a\,+\,\omega^{\prime}_a\otimes\omega^{\prime}_a)
\eeq
In terms of such a metrics, a natural choice for a connection is given by the orthogonal splitting of $TP$:
the vertical part, span by $V_a$ in \eqref{vfi},  is fixed by the gauge action,  is orthogonal to the  horizontal part, given by the span of  $H_a\,=\,L_a^{\prime}+R_a$. If we denote by 
 $\{X_a\} $ a basis of the Lie algebra of $H\,\sim\,G$, such a connection can be written as a ${\rm Lie}(G)$-valued 1-form on $P$, namely 
\beq
\label{c1fo}
\Omega\,=\,\frac{1}{2}\,X_a\,\otimes\,(\omega^{\prime}_a\,-\,\eta_a).
\eeq 
In order to obtain, from a connection on the principal bundle $P$, a vector potential on the basis, we need a section of $P$. The principal bundle we are considering is globally trivial, one can write a projection $\pi\,:\,P\,\to\,P/H$
\beq
\pi\,:\,(g\,\times\,g^{\prime})\,\mapsto\,g^{\prime}g
\label{proP}
\eeq
where trivially $P/H\,\sim\,G$. We denote by $M$ the basis of the bundle, and introduce the global section $\sigma\,:\,M\,\to\,P$ via
\beq
\sigma\,:\,m\,\to\,1\,\times\,m
\label{sig}
\eeq 
so that the global trivialization $P\,\sim\,H\,\times\,M$ of the bundle is given by $(h^{-1}, m\,h)$. Via such a global section we define the vector potential  $A\,=\,\sigma^*(\Omega)$ as a ${\rm Lie}(H)$-valued 1-form on the basis $M$. We get, from \eqref{c1fo}:
\beq
\label{pove}
A\,=\,\frac{1}{2}\,X_a\otimes\omega^{\prime}_a
\eeq
which, in terms of the given trivialization, coincides with the meron connection.

\subsection{Hermitian structure on ${\rm T}^*\bS^n$}\label{sec:3.2}
Although $\R^{n+1}\smallsetminus\{0\}$ is
diffeomorphic to $\R^+\times\bS^n$, the metrics is not exactly the product metrics.
For $i=0,\ldots,n$, let $x_i$ be Cartesian coordinates on \mbox{$\R^n\smallsetminus\{0\}$},
\mbox{$r=\left(\sum_ix_i^2\right)\!{}^{1/2}$} the radius, and $\xi_i=x_i/r$ coordinates
on the unit sphere $\bS^n$.
Using the identity $\dd x_i=r\dd \xi_i+\xi_i\dd r$, we write the metric tensor of $\R^{n+1}$
in spherical coordinates:
\begin{equation}\label{eq:warpedsn}
g_{\R^{n+1}}:=\sum \dd x_i\otimes \dd x_i=\dd r\otimes\dd r+r^2\sum\dd \xi_i\otimes\dd \xi_i \;.
\end{equation}
where we used $\sum \xi_i\dd \xi_i=0$ (coming from $\sum \xi_i^2=1$ and the Leibniz rule).
So
\begin{equation*}
g_{\R^{n+1}}=g_{\R^+}+r^2g_{\bS^n}
\end{equation*}
is \emph{almost} a product metric, except for the $r^2$ factor. This affects
the Hermitian structure on differential forms as follows. If we denote
by $(~,~)_m$ the Hermitian structure on $m$-forms on $\R^{n+1}$,
by $(~,~)^{(1)}_m$ the one on $\R^+$ and by $(~,~)^{(2)}_m$ the one on $\bS^n$,
we have
\begin{equation}\label{eq:herRS}
(\xi_1\otimes\xi_2,\eta_1\otimes\eta_2)_m :=
r^{-2k}(\xi_1,\eta_1)_{m-k}^{(1)}\otimes (\xi_2,\eta_2)_k^{(2)}
\end{equation}
for all $\xi_1,\eta_1\in\Lambda^{m-k}(\R^+)$ and $\xi_2,\eta_2\in\Lambda^k(\bS^n)$,
where now $r$ is the function that associates to a point in $\R^{n+1}$ its radius.
Exactly like for a Cartesian product, $\Lambda^m(\R^{n+1}\smallsetminus\{0\})=\bigoplus_{k=0}^m\Lambda^{m-k}(\R^+)\otimes\Lambda^k(\bS^n)$ and with the notation
\begin{equation*}
\dd=\sum \dd x_i\frac{\partial}{\partial x_i} \;,\qquad
\dd_1=\dd r\frac{\partial}{\partial r} \;,\qquad
\dd_2=\sum \dd \xi_i\frac{\partial}{\partial \xi_i} \;,
\end{equation*}
one has $\dd=\dd_1\otimes 1+1\otimes\dd_2$ on functions (extended to forms with the graded
Leibniz rule). The volume forms are related by
$\dd^{n+1}x=r^n\dd r\,\dd^n\theta$, where $\dd^n\theta$ is the volume form on $\bS^n$.
For a product $M_1\times M_2$ of noncommutative spaces, with metrics of the above type,
we will show that solutions of Yang-Mills on $M_2$ give solutions of Yang-Mills on the product.

\section{Warped product of noncommutative spaces}\label{sec:3}

In the previous section we worked out a particular case of a general construction
called warped product.
The \emph{warped product} \cite{BoN69} $M_1\times_f M_2$ of two Riemannian manifolds $(M_1,g_1)$ and $(M_2,g_2)$, with warping function $f:M_1\to\R^+$, is the product manifold
$M=M_1\times M_2$ equipped with the metric
\begin{equation*}
g=g_1+f^2g_2 \;.
\end{equation*}
Warped product manifolds play an interesting role in both differential geometry and physics (see e.g.~\cite{BEP82,ONei83}). The usual product manifold metric give trivial examples of them, while a first non trivial example is given by the  decomposition in polar coordinates systems of a euclidean $\R^n$ space.
Several Riemannian symmetric spaces are examples of warped products, e.g.~the hyperbolic space $H^{n+1}$ can be decomposed as $\R\times_{f}\R^n$ with $f(t)=e^t$, and every revolution surface which does not intersect the revolution axis is isometric to $M_1\times_f\bS^1$, with
$M_1$ the generating curve and $f(x)$ the distance from the axis.
The warped product makes sense in pseudo-Riemannian geometry as well (in fact, this is the cases of physical interest).  Robertson-Walker spaces (modelling an expanding universe in cosmology) correspond to the case of $M_2$  a Riemannian manifold of constant curvature, and $M_1$  an interval of $\R^+$ with pseudo-Riemannian metric $-\,\dd t\otimes\dd t$ ($t$ is a time coordinate).
The Schwarzschild space-time (modelling the outer space around a massive star or a black hole) is indeed a  warped product manifold.

There are several generalizations (e.g.~local warped products, multi-warped products).
In the present paper we consider a noncommutative geometric version of this construction. We will explain how the Hodge star on a warped product is related to the ones on the two factors, and use this to prove that the pullback of a solution of Yang-Mills on $M_2$ solves Yang-Mills on $M$.

\subsection{Product of calculi}
For $i=1,2$, let $(\Lambda_i^\bullet,\dd_i)$ be a differential $*$-calculus
of dimension $n_i$ (i.e.~$\Lambda^k_i=0\;\forall\;k>n_i$) over $\A_i=\Lambda_i^0$, and denote by $\wedge_i$ the product. Assume, as in \S\ref{sec:1.4.2}, that $\Lambda^{n_i}_i$
is a free right $\A_i$-module of rank $1$, with basis element $\tau_i$, and that each $\Lambda_i^k$ is a self-dual right-module with Hermitian structure $(~,~)_k^{(i)}$.
We also assume that $(a,b)_0^{(i)}=(\tau_ia,\tau_ib)_{n_i}^{(i)}=a^*b$ for all $a,b\in\A_i$.
A Hodge star operator $\ast_H^{(i)}:\Lambda^{n_i-k}_i\to\Lambda^k_i$ is uniquely defined by
\eqref{eq:2.2}.

The analogue of the Cartesian product of smooth manifolds is the 
tensor product of differential complexes (see e.g.~\S1.0.14 of \cite{Lod92}).
The algebra $\A=\A_1\otimes\A_2$ plays the role of the algebra of ``functions'' on the product space.
A differential $*$-calculus $(\Lambda^\bullet,\dd)$ on $\A$, with
dimension $n=n_1+n_2$, is given by
\begin{equation}\label{eq:1}
\Lambda^m=\bigoplus_{k=0}^{m}\Lambda_1^{m-k}\otimes\Lambda^k_2 \;.
\end{equation}
This is a graded associative algebra with product
\begin{equation*}
(\xi_1\otimes\xi_2)\wedge (\eta_1\otimes\eta_2)=(-1)^{\mathrm{dg}(\xi_2)\mathrm{dg}(\eta_1)}
(\xi_1\wedge_1\eta_1)\otimes(\xi_2\wedge_2\eta_2)
\end{equation*}
for all $\xi_i,\eta_i\in\Lambda_i^\bullet$, $i=1,2$.
If we set  the differential as
\begin{equation}\label{eq:diff}
\dd(\xi_1\otimes\xi_2)=\dd_1\xi_1\otimes\xi_2+(-1)^{\mathrm{dg}(\xi_1)}\xi_1\otimes\dd_2\xi_2 \;,
\end{equation}
the graded Leibniz rule is automatically satisfied, while defining the involution as
\begin{equation*}
(\xi_1\otimes\xi_2)^*=\xi_1^*\otimes\xi_2^* \;,
\end{equation*}
we get a differential $*$-calculus on $\A$.

\subsection{Warped product of Hermitian structures}
\label{ss:wp}

A right-module Hermitian structure on $\Lambda^m$ is defined as follows. As usual, we declare  $\Lambda^m$ and $\Lambda^{m'}$
to be are orthogonal if $m\neq m'$ and declare also  $\Lambda_1^{m-k}\otimes\Lambda^k_2\subset\Lambda^m$ and $\Lambda_1^{m-h}\otimes\Lambda^h_2\subset\Lambda^m$
to be orthogonal too if $k\neq h$, while for $h=k$, inspired by \eqref{eq:herRS}, we set:
\begin{equation*}
(\xi_1\otimes\xi_2,\eta_1\otimes\eta_2)_m =
\rho_k^{-2}(\xi_1,\eta_1)_{m-k}^{(1)}\otimes (\xi_2,\eta_2)_{k}^{(2)}
\end{equation*}
for all $\xi_1,\eta_1\in\Lambda_1^{m-k}$ and $\xi_2,\eta_2\in\Lambda_2^k$,
where $\rho_0,\ldots,\rho_{n_2}\in\A_1$ are some fixed positive invertible elements.
Notice that $\Lambda^n=\Lambda_1^{n_1}\otimes\Lambda_2^{n_2}$ is a free right $\A$-module
of rank $1$. As a basis element for it we choose:
\begin{equation*}
\tau=\tau_1\rho_{n_2}\otimes\tau_2 \;,
\end{equation*}
which  is real whenever $\tau_1\rho_{n_2}$ and $\tau_2$ are. A corresponding Hodge star operator $\ast_H:\Lambda^k\to\Lambda^{n-k}$ is defined by \eqref{eq:2.2}.
In the example of warped product of two Riemannian manifolds, it is $\rho_k=f^k$
for all $k=0,\ldots,n_2$, where $f$ is the warping function.

\begin{prop}
For  $\xi_1\otimes\,\xi_2\,\in\,\Lambda^{m-k}_1\otimes\Lambda^k_2$, it is
\begin{equation}\label{eq:prodHodge}
\ast_H(\xi_1\otimes\,\xi_2)\,=\,(-1)^{k(n_1-m+k)}\ast_H^{(1)}(\xi_1)\rho^{-1}_{n_2}\rho_{n_2-k}^2\otimes\ast_H^{(2)}(\xi_2),
\end{equation}
where $\ast_{H}^{(i)}\,:\,\Lambda_i^{k}\,\to\,\Lambda_i^{n_i-k}$ are the Hodge duality previously defined on $\Lambda^{\bullet}_i$ for $i\,=\,1,2$.
\end{prop}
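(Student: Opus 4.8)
The strategy is to invoke the uniqueness part of Proposition \ref{prop:list}(i): since $\ast_H$ is the \emph{unique} linear map satisfying the defining relation \eqref{eq:2.2} for the volume form $\tau=\tau_1\rho_{n_2}\otimes\tau_2$, it suffices to check that the right-hand side of \eqref{eq:prodHodge} verifies \eqref{eq:2.2}. Writing $\alpha=\xi_1\otimes\xi_2$ of total degree $m$, I must show that $\tau\,(\text{candidate},\beta)_{n-m}=\alpha^*\wedge\beta$ for every $\beta\in\Lambda^{n-m}$. By linearity and by the orthogonality built into the warped Hermitian structure, I may take $\beta=\eta_1\otimes\eta_2$ homogeneous; moreover both sides vanish unless $\eta_1\in\Lambda_1^{n_1-m+k}$ and $\eta_2\in\Lambda_2^{n_2-k}$, since $\alpha^*\wedge\beta$ is a form of total degree $n$ and $\Lambda^n=\Lambda_1^{n_1}\otimes\Lambda_2^{n_2}$ forces both factor-degrees to be maximal (for any other bidegree one factor exceeds its top degree and the wedge is zero, matching the vanishing of the pairing). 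So I reduce to this single bidegree.

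Next I would expand the target identity's right-hand side using the product rule of \eqref{eq:1}. From $(\xi_1\otimes\xi_2)^*=\xi_1^*\otimes\xi_2^*$ and the graded sign one gets
\begin{equation*}
\alpha^*\wedge(\eta_1\otimes\eta_2)=(-1)^{k(n_1-m+k)}(\xi_1^*\wedge_1\eta_1)\otimes(\xi_2^*\wedge_2\eta_2) \;,
\end{equation*}
the Koszul sign being exactly the factor $(-1)^{k(n_1-m+k)}$ appearing in \eqref{eq:prodHodge}. Applying the two factorwise defining relations of the form $\tau_i\,(\ast_H^{(i)}\zeta,\,\cdot\,)^{(i)}=\zeta^*\wedge_i(\,\cdot\,)$ then rewrites this as $(-1)^{k(n_1-m+k)}\bigl(\tau_1(\ast_H^{(1)}\xi_1,\eta_1)^{(1)}\bigr)\otimes\bigl(\tau_2(\ast_H^{(2)}\xi_2,\eta_2)^{(2)}\bigr)$.

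Then I would compute the left-hand side $\tau\,(\ast_H\alpha,\beta)_{n-m}$ with $\ast_H\alpha$ given by the candidate formula. Evaluating the warped Hermitian structure on the bidegree $(n_1-m+k,\,n_2-k)$ brings out the prefactor $\rho_{n_2-k}^{-2}$, while sesquilinearity carries the right multiplier $c=\rho^{-1}_{n_2}\rho_{n_2-k}^2$ of the first slot across as $c^*=\rho_{n_2-k}^2\rho^{-1}_{n_2}$. Multiplying finally by $\tau=\tau_1\rho_{n_2}\otimes\tau_2$ and collecting the $\A_1$-factors in the first tensor leg produces the scalar $\rho_{n_2}\rho_{n_2-k}^{-2}c^*=\rho_{n_2}\rho_{n_2-k}^{-2}\rho_{n_2-k}^2\rho^{-1}_{n_2}=1$, and the $\Lambda_2$-leg already coincides, so the two expressions agree leg by leg.

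The only delicate point is this bookkeeping, and it is where the statement is genuinely forced: one must track the Koszul sign from \eqref{eq:1} (which fixes the sign in the conclusion), the $\rho_{n_2-k}^{-2}$ coming from the Hermitian pairing at the dual bidegree, the $\rho_{n_2}$ coming from the chosen volume form, and the adjoint $c^*$ produced by sesquilinearity, and then observe that the precise ordering $\rho^{-1}_{n_2}\rho_{n_2-k}^2$ in \eqref{eq:prodHodge} is exactly what makes these four contributions telescope to $1$ \emph{even when $\A_1$ is noncommutative}. Once this cancellation is in place, uniqueness closes the argument with no further computation.
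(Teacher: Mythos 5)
Your proof is correct and is essentially the paper's own argument: both verify that the proposed formula pairs against an arbitrary $\eta_1\otimes\eta_2$ exactly as \eqref{eq:2.2} demands --- same Koszul sign $(-1)^{k(n_1-m+k)}$, same factorwise use of the Hodge relations on each tensor leg, same telescoping $\rho_{n_2}\,\rho_{n_2-k}^{-2}\,\rho_{n_2-k}^{2}\,\rho_{n_2}^{-1}=1$ --- and then conclude by non-degeneracy/uniqueness of the defining relation. Your explicit check that both sides vanish on the other bidegrees of $\Lambda^{n-m}$ is a small tightening that the paper leaves implicit, but the route is the same.
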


\begin{proof}
By construction, for all $\xi_1\in\Lambda_1^{m-k},\eta_1\in\Lambda_1^{n_1-m+k},
\xi_2\in\Lambda_2^k,\eta_2\in\Lambda_2^{n_2-k}$, we have:
\begin{align*}
&\tau\,(\,\ast_H\,(\xi_1\,\otimes\,\xi_2)\,,\,\eta_1\,\otimes\,\eta_2\,)^{n-m}\, 
\\[8pt] &\qquad
=\,
(\xi_1\,\otimes\,\xi_2)^*\,\wedge\, (\eta_1\,\otimes\,\eta_2)
\\[8pt] 
&\qquad
=\,(\xi_1^*\otimes\xi_2^*)\,\wedge \,(\eta_1\otimes\eta_2)
\\[8pt] & \qquad
=\,(-1)^{k(n_1-m+k)}\,(\xi_1^*\wedge_1\eta_1)\otimes(\xi_2^*\wedge_2\eta_2)
\\[8pt] & \qquad
=(-1)^{k(n_1-m+k)}\,\tau_1\,(\,\ast_H^{(1)}\xi_1\,,\,\eta_1\,)_{n_1-m+k}^{(1)}\,\otimes\, \tau_2\,(\,\ast_H^{(2)}\xi_2\,,\,\eta_2\,)_{n_2-k}^{(2)}
\\[8pt] &\qquad
=(-1)^{k(n_1-m+k)}\,\big\{\,\tau_1\,\rho_{n_2}\,\otimes\,\tau_2\,\big\}\,\wedge\,
\big\{\,\rho^{-1}_{n_2}\,(\,\ast_H^{(1)}\xi_1\,,\,\eta_1\,)_{n_1-m+k}^{(1)}\,\otimes\, (\,\ast_H^{(2)}\xi_2\,,\,\eta_2\,)_{n_2-k}^{(2)}\,\big\}
\\[8pt] & \qquad
=(-1)^{k(n_1-m+k)}\,\tau\,
\big\{\,\rho_{n_2-k}^{-2}\,(\,\ast_H^{(1)}\,(\,\xi_1\,)\,\rho^{-1}_{n_2}\,\rho_{n_2-k}^2\,
,\,\eta_1\,)_{n_1-m+k}^{(1)}\,\otimes\, (\,\ast_H^{(2)}\,\xi_2\,,\,\eta_2\,)_{n_2-k}^{(2)}\big\}
\\[8pt] &\qquad
=(-1)^{k(n_1-m+k)}\,\tau\,(\,\ast_H^{(1)}\,(\xi_1)\,\rho^{-1}_{n_2}\,\rho_{n_2-k}^2\,\otimes\,
\ast_H^{(2)}\,\xi_2\,,\,\eta_1\,\otimes\,\eta_2\,)_{n-m}.
\end{align*}
From the non-degeneracy of the Hermitian structure we then derive \eqref{eq:prodHodge}.
\end{proof}

\subsection{Yang-Mills and warped products}
\label{ss:ym}

We now prove that, given a solution of the Yang-Mills equations over a suitable class of manifolds -- that we denote by $M_2$ -- its pullback via 
the projection
$M=M_1\times_fM_2\to M_2$ solves the Yang-Mills equation on the warped product $M$. We give the proof in the general
framework of previous section, considering  connections on free modules (as
is the case of merons), i.e. operators  of the form (see \eqref{ccu}) $\nabla=\dd+\alpha$ where $\alpha$ is a suitable matrix whose entries are $1$-forms.

We start by noticing that the pullback of $\alpha_2\,\in\, \mathbb{M}_N(\Lambda^1_2)$ is   the element $1\,\otimes\, \omega_2\,\in\, \mathbb{M}_N(\Lambda^1)$.

\begin{prop}
The connection $\nabla_2=\dd_2+\omega_2$ solves the Yang-Mills equation on the free module $\A_2^N$
if and only if $\nabla=\dd+1\otimes \omega_2$ solves the Yang-Mills equation on $\A^N$.
\end{prop}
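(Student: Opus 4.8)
The plan is to track the curvature of the pullback connection through the three structural ingredients of the warped product---the product differential \eqref{eq:diff}, the warped Hodge star \eqref{eq:prodHodge}, and the covariant derivative $D=[\nabla,\,\cdot\,]$---and to show that the Yang--Mills operator on $M$ factors as a single invertible top-degree form on $M_1$ tensored with the Yang--Mills operator on $M_2$. I would open by computing the curvature of $\nabla=\dd+1\otimes\omega_2$. Writing $\alpha=1\otimes\omega_2$ and using $\dd(1\otimes\omega_2)=1\otimes\dd_2\omega_2$ (since $\dd_1 1=0$) together with the product rule $(1\otimes\omega_2)\wedge(1\otimes\omega_2)=1\otimes(\omega_2\wedge_2\omega_2)$---the sign $(-1)^{\mathrm{dg}(\omega_2)\mathrm{dg}(1)}$ being trivial---one obtains $F=\nabla^2=1\otimes F_2$, where $F_2=\dd_2\omega_2+\omega_2\wedge_2\omega_2$ is the curvature of $\nabla_2$. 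Thus the pullback of the curvature is again a pure tensor, sitting in $\mathbb{M}_N(\Lambda_1^0\otimes\Lambda_2^2)$.

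Next I would apply \eqref{eq:prodHodge} with $m-k=0$ and $k=2$. Since $\ast_H^{(1)}(1)=\tau_1$ by Proposition~\ref{prop:list}(ii) and the sign $(-1)^{k(n_1-m+k)}=(-1)^{2n_1}=1$, this yields
\[
\ast_H F=\Theta\otimes\ast_H^{(2)}F_2,\qquad \Theta:=\tau_1\,\rho_{n_2}^{-1}\rho_{n_2-2}^2\in\Lambda_1^{n_1}\,,
\]
a scalar top-degree form on $M_1$ carrying an invertible $\A_1$-valued coefficient. The last step is to compute $D(\ast_H F)=\dd(\ast_H F)+[\alpha,\ast_H F]$ directly. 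Because $\Theta$ is of top degree, $\dd_1\Theta=0$, so $\dd(\ast_H F)=(-1)^{n_1}\Theta\otimes\dd_2(\ast_H^{(2)}F_2)$; and because $\Theta$ lives in the $\Lambda_1$ leg and carries no matrix part, it commutes past $\omega_2$, so the graded commutator $[\alpha,\,\cdot\,]$ acts only on the $\Lambda_2$ factor. Collecting the resulting signs, the three terms recombine into
\[
D(\ast_H F)=(-1)^{n_1}\,\Theta\otimes D_2\!\big(\ast_H^{(2)}F_2\big),
\]
where $D_2=[\nabla_2,\,\cdot\,]=\dd_2+[\omega_2,\,\cdot\,]$. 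Since $\rho_{n_2}$ and $\rho_{n_2-2}$ are invertible, $\Theta$ is a free generator of the rank-one module $\Lambda_1^{n_1}$, so $\Theta\otimes X=0$ forces $X=0$; hence $D(\ast_H F)=0$ if and only if $D_2(\ast_H^{(2)}F_2)=0$, which is precisely the claimed equivalence.

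The main obstacle is really the only delicate point, namely the sign bookkeeping in this last factorization: one must check that the graded-commutator signs $(-1)^n$ and $(-1)^{n_2}$ align so that, for $\Psi=\ast_H^{(2)}F_2$ of degree $n_2-2$, the expression $(-1)^{n_1}[\dd_2\Psi+\omega_2\wedge_2\Psi]-(-1)^{n_1+n_2}\Psi\wedge_2\omega_2$ is exactly $(-1)^{n_1}D_2\Psi$, and one must confirm that the $\A_1$-valued warping coefficients in $\Theta$ genuinely decouple from both $\dd_2$ and $\omega_2$ (this rests on the two differentials acting on separate tensor legs and on $\Theta$ carrying no matrix structure). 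No appeal to the Bianchi identity \eqref{Bi} is needed, since $D(\ast_H F)$ is evaluated explicitly rather than inferred from self-duality.
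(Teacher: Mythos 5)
Your proof is correct and follows essentially the same route as the paper's: factor the curvature as $F=1\otimes F_2$, apply the warped Hodge formula \eqref{eq:prodHodge} with Prop.~\ref{prop:list}(ii) to get $\ast_H F=\tau_1\,\rho_{n_2}^{-1}\rho_{n_2-2}^{2}\otimes\ast_H^{(2)}F_2$, and then show $D(\ast_H F)=(-1)^{n_1}\,\tau_1\,\rho_{n_2}^{-1}\rho_{n_2-2}^{2}\otimes D_2\bigl(\ast_H^{(2)}F_2\bigr)$. Your only additions are the explicit sign bookkeeping in the graded commutator and the observation that invertibility of the $\rho$'s makes the tensor factor $\tau_1\rho_{n_2}^{-1}\rho_{n_2-2}^{2}$ a free generator (so the final ``if and only if'' is faithful), a point the paper leaves implicit.
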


\begin{proof}
The curvature of $\nabla$ is $F\,=\,1\,\otimes\, F_2$, with $F_2$
the curvature of $\nabla_2$. From \eqref{eq:prodHodge} and Prop.~\ref{prop:list}(ii):
\begin{equation*}
\ast_H\,F \,=\,\tau_1\,\rho^{-1}_{n_2}\,\rho_{n_2-2}^2\,\otimes\,\ast_H^{(2)}\,F_2
\end{equation*}
Since $\dd_1$ is zero on $\Lambda_1^{n_1}$, from \eqref{eq:diff}:
\begin{equation*}
\dd(\ast_H\,F)\,=\,(-1)^{n_1}\,\tau_1\,\rho^{-1}_{n_2}\,\rho_{n_2-2}^2\,\otimes\,\dd_2(\ast_H^{(2)}\,F)
\end{equation*}
and
\begin{equation*}
[\nabla,\star_H\,F]\,=\,(-1)^{n_1}\,\tau_1\,\rho^{-1}_{n_2}\,\rho_{n_2-2}^2\,\otimes\, [\nabla_2,\ast_H^{(2)}\,F_2] \;.
\end{equation*}
Thus $[\nabla,\star_H\,F]\,=\,0$ if and only if $[\nabla_2,\ast_H^{(2)}\,F_2]\,=\,0$.
\end{proof}

We stress that for $n=4$, since the element $F\,\in\,\Lambda_1^0\,\otimes\,\Lambda_2^2$
and the element  $\ast_H\,F\,\in\,\Lambda_1^{n_1}\,\otimes\,\Lambda_2^{n_2-2}$ are in two  orthogonal
subspaces, $F$ can not be an eigenvector of the Hodge duality operator (unless
$n_1=0$).
In the example of $\R^4\smallsetminus\{0\}=\R^+\,\times_{r}\,\bS^3$, any solution of
Yang-Mills on $\bS^3$ gives a solution on $\R^4\smallsetminus\{0\}$ that is not
(anti) self-dual.

\section{Solutions of the Yang-Mills equations over $\SU$}
\label{s:YMq}

As quantum group $\SU$ (we adopt the notations and the conventions in \cite{KS97}) we consider the  polynomial  unital  $*$-algebra $\ASU=(\SU,\Delta,S,\eps)$ generated by elements $a$ and $c$ such that, using the matrix notation
\beq
\label{Us}
 u = 
\left(
\begin{array}{cc} a & -qc^* \\ c & a^*
\end{array}\right) , 
\eeq
the Hopf algebra structure can be expressed as
$$
uu^*=u^*u=1,\qquad\qquad \,\Delta\, u = u \otimes u ,  \qquad\qquad S(u) = u^* , \qquad\qquad \eps(u) = 1.
$$
The deformation parameter
$q\in\IR$ is restricted  without loss of generality  to the interval $0<q<1$. 
The quantum universal envelopping algebra $\su$ is the unital Hopf $*$-algebra
generated the four elements $K^{\pm 1},E,F$, with $K K^{-1}=1$ and the
relations: 
\beq 
K^{\pm}E=q^{\pm}EK^{\pm}, \qquad\qquad 
K^{\pm}F=q^{\mp}FK^{\pm}, \qquad\qquad  
[E,F] =\frac{K^{2}-K^{-2}}{q-q^{-1}} . 
\label{relsu}
\eeq 
The $*$-structure is
$K^*=K, \,  E^*=F $,
while the Hopf algebra structures are 
\begin{align*}
&\Delta(K^{\pm}) =K^{\pm}\otimes K^{\pm}, \quad
\Delta(E) =E\otimes K+K^{-1}\otimes E,  \quad 
\Delta(F)
=F\otimes K+K^{-1}\otimes F,
\\ &\qquad S(K) =K^{-1}, \quad
S(E) =-qE, \quad 
S(F) =-q^{-1}F \\ 
&\qquad\qquad\varepsilon(K)=1, \quad \varepsilon(E)=\varepsilon(F)=0.
\end{align*}
The non degenerate Hopf algebra pairing between the two algebras above is  the $*$-compatible bilinear mapping $\hs{~}{~}:\su\times\ASU\to\IC$   given on the generators by
\begin{align}
\langle K^{\pm},a\rangle\,=\,q^{\mp 1/2},& \qquad\qquad  \langle K^{\pm},a^*\rangle\,=\,q^{\pm 1/2} \nn \\   \langle E,c\rangle\,=\,1, &\qquad\qquad \langle F,c^*\rangle\,=\,-q^{-1} \label{ndp}
\end{align}
with all other couples of generators pairing to zero. The algebra $\su$ is recovered as a $*$-Hopf subalgebra in the dual algebra $\ASU^o$, the largest Hopf $*$-subalgebra contained in the dual vector space  $\ASU^{\prime}$.
The $*$-compatible canonical commuting actions of $\su$ on $\ASU$ are
$$
h \lt x = x_{(1)} \,\hs{h}{x_{(2)}}, \qquad\qquad
x \rt  h = \hs{h}{x_{(1)}}\, x_{(2)}. 
$$
The first order differential calculus (FODC)  $(\dd, \Lambda^1)$  we are going to consider on  $\SU$ was introduced by Woronowicz \cite{Wor87}; $\Lambda^1$ is a three dimensional left covariant free $\ASU$-bimodule.
A basis of left-invariant 1-forms ($\Lambda^1_{{\rm inv}}\,\subset\,\Lambda^1$) is 
\begin{align}
&\qquad\qquad\omega_{z} =a^*\dd a+c^*\dd c ,\nn \\
&\qquad\qquad \omega_{-} =c^*\dd
a^*-qa^*\dd c^*, \nn \\ 
&\qquad\qquad\omega_{+} =a\dd c-qc \dd a.
\label{q3dom}
\end{align}
It is  a $*$-calculus, with  $\omega_{-}^*=-\omega_{+}$ and
$\omega_{z}^*=-\omega_{z}$.  
The  basis of the quantum tangent space  $\mathcal{X}\,\subset\su$ dual to $\Lambda_{{\rm inv}}$ \eqref{q3dom}  
 is given by
\begin{align} 
&\qquad\qquad\qquad\qquad X_{-}=q^{-1/2}FK, \nn \\
&\qquad\qquad\qquad \qquad X_{+}=q^{1/2} EK, \nn \\ 
&\qquad\qquad\qquad \qquad X_{z}= (1-q^{-2})^{-1}\left(1-K^4\right). \label{Xv3}
\end{align}
so that one can write exact 1-forms as $\dd x=\sum_{a=\pm,z}(X_{a}\lt x)\omega_{a}$
with $x\,\in\,\ASU$. 
This FODC has a non canonical braiding $\sigma$ on $\Gamma^{\otimes 2}$ \cite{Hec01}, which is as follows 
\begin{align}
\sigma(\omega_{a}\otimes\omega_{a})=\omega_{a}\otimes\omega_{a},&\qquad\qquad a=\pm,z
\nn \\
\sigma(\omega_{-}\otimes\omega_{+})=(1-q^{2})\omega_{-}\otimes\omega_{+}+q^{-2}\omega_{+}\otimes\omega_{-}, & \qquad\qquad \sigma(\omega_{+}\otimes\omega_{-})=q^4\omega_{-}\otimes\omega_{+}, \nn \\
\sigma(\omega_{-}\otimes\omega_{z})=(1-q^{2})\omega_{-}\otimes\omega_{z}+q^{-4}\omega_{z}\otimes\omega_{-}, &
\qquad\qquad \sigma(\omega_{z}\otimes\omega_{-})=q^6\omega_{-}\otimes\omega_{z}, \nn \\
\sigma(\omega_{z}\otimes\omega_{+})=(1-q^{2})\omega_{z}\otimes\omega_{+}+q^{-4}\omega_{+}\otimes\omega_{z},& \qquad\qquad
\sigma(\omega_{+}\otimes\omega_{z})=q^6\omega_{z}\otimes\omega_{+}.
\label{brai}
\end{align}
This braiding allows to introduce both a wedge product among 1-forms (and then to build the complete exterior algebra $\Lambda^{\bullet}\,=\,\oplus_{k=0}^3\Lambda^{ k}$) and a quantum commutator $[~,~]_{\sigma}$ in $\mathcal{X}$ satisfying a peculiar Jacobi identity. The 
exterior wedge product satisfies
$\omega_{a}\wedge\omega_{a}=0, \quad (a=\pm,z)$ and  
\begin{align}
&\qquad\qquad \omega_{-}\wedge\omega_{+}+q^{-2}\omega_{+}\wedge\omega_{-}=0, \nn \\
&\qquad\qquad
\omega_{z}\wedge\omega_{\mp}+q^{\pm4}\omega_{-}\wedge\omega_{z}=0,
\label{commc3} 
\end{align}
while for the quantum commutator one has
\begin{align}
&\qquad\qquad[X_a, X_a]_{\sigma}\,=\,0 \qquad\qquad a\,=\,(\pm,z)\nn \\
&\qquad\qquad[X_-,X_+]_{\sigma}\,=\,-q^4[X_+,X_-]_{\sigma}\,=\,q^2X_z \nn \\ 
&\qquad\qquad [X_-,X_z]_{\sigma}\,=\,-q^6[X_z,X_-]_{\sigma}\,=\,-q^4(1+q^2)X_- \nn \\
&\qquad\qquad [X_+,X_z]_{\sigma}\,=\,-q^{-6}[X_z,X_+]_{\sigma}\,=\,(1+q^{-2})X_+.
\label{qcv}
\end{align}
If we write the above formulas as $[X_{a}, X_b]_{\sigma}\,=\,f_{ab}^cX_c$ then a Maurer Cartan equation is valid, namely
\beq
\label{mcq}
\dd\,\omega_a\,=\,-\frac{1}{1+q^2}\,f_{bc}^a\,\omega_{b}\wedge\,\omega_c\,=\,-\frac{1}{\lambda}\,f_{bc}^a\,\omega_{b}\wedge\,\omega_c.
\eeq 
where the last equality defines the coefficient $\lambda$: this relation is the counterpart of the classical \eqref{mccl} within such a  quantum formalism.

\subsection{A Hodge duality operator over $\SU$}
\label{ss:hsq}

In order to introduce a meaningful Hodge duality operator over the exterior algebra $\Lambda^{\bullet}$ described above we consider 
the  free (right) self-dual modules $\Lambda^k,\,(k\,=\,0,\ldots, 3)$ and set  the hermitian structure on a basis in $\Lambda^1$ as
\beq
(\omega_+,\omega_+)_1\,=\,\alpha,\qquad\qquad(\omega_-,\omega_-)_1\,=\,\beta,\qquad\qquad(\omega_z,\omega_z)_1\,=\,\gamma
\label{sth1}
\eeq
with $\alpha, \beta, \gamma$ complex numbers such that $\alpha\,\beta\,\gamma\,\neq\,0$. Following \cite{ale11, ale12}, the antisymmetriser operators of the differential calculus (those coming from the braiding in \eqref{brai}) allow to extend it to a hermitian structure on $\Lambda^k$ for $k\,>\,1$. Given the hermitian volume form $\tau\,=\,\omega_-\wedge\omega_+\wedge\omega_z$ we introduce a Hodge duality $\ast_H\,:\,\Lambda^{k}\,\to\,\Lambda^{3-k}$ via the equation \eqref{eq:2.2}
\beq 
\tau(\ast_H\,\xi, \xi^{\prime})_{3-k}\,=\, \xi^*\wedge\,\xi^{\prime},
\label{defh}
\eeq
which trivially gives $\ast_H(1)\,=\,\tau$ and $\ast_H(\tau)\,=\,1$. The non trivial terms are given by:
\begin{align}
&\ast_H(\omega_-)\,=\,(\lambda\,q^4/2\,\beta\,\gamma)\,\omega_-\wedge\omega_z, \nn \\
&\ast_H(\omega_+)\,=\,-(\lambda\,q^{-6}/2\,\alpha\,\gamma)\,\omega_+\wedge\omega_z, \nn \\
&\ast_H(\omega_-)\,=\,-(\lambda/2\,\alpha\,\beta)\,\omega_-\wedge\omega_+ 
\label{hsf2}
\end{align}
and 
\begin{align}
&\ast_H(\omega_-\wedge\omega_z)\,=\,(q^2/\beta)\,\omega_-, \nn \\
&\ast_H(\omega_+\wedge\omega_z)\,=\,-(1/\alpha)\,\omega_+, \nn \\ 
&\ast_H(\omega_-\wedge\omega_+)\,=\,-(1/\gamma)\,\omega_z, 
\label{hsf1}
\end{align}  
where the parameter  $\lambda$ was defined in \eqref{mcq}. The two equirements that  (i) the  degeneracy of the operator $(\ast_{H})^2$ on each $\Lambda^k$  is  suitably compatible with the braiding of the differential calculus and (ii) the action of $\ast_H$ commutes with that of the complex conjugation, are taken as the  
 definition of  reality and symmetry for the scalar product given in \eqref{sth1}.  We have 
 \beq
 \label{coH}
 \R\,\ni\,\alpha\,=\,q^{-6}\beta\,\neq\,0,\qquad\qquad\R\,\ni\,\gamma\,\neq\,0
 \eeq
 and we see that   such a Hodge duality is not one of those presented in \cite{ale12}, but gives the same notions of symmetry and reality for the corresponding scalar product \eqref{sth1}.

The relations \eqref{hsf2} and \eqref{hsf1}, together with the conditions \eqref{coH}, give us a meaningful Hodge duality operator on $\Lambda^{\bullet}$ over $\SU$. In order to get a solution of the Yang-Mills equations over $\SU$ which mimics the classical one described in section \ref{sec:YMS3} we need to select the Hodge duality that would extend, within the quantum setting,  the one -- given in \eqref{Hcl} -- associated to the Cartan-Killing metrics on the classical $\bS^3$. We notice from \eqref{mccl}  that on $\bS^3$ the relation $\ast_H(\omega)\,=\,-\,\dd\omega$ holds for any $\omega\,\in\,\Lambda^1_{{\rm inv}}$, and also that such a relation completely characterize a conformal class of metric tensors on $\bS^3$, namely those to which the Cartan-Killing metrics belongs. It seems then natural to select a class of Hodge duality operators on $\SU$ by setting:
\beq
\label{ckq}
\ast_{H}(\omega_{a})\,=\,\mu\,\dd\omega_a
\eeq
on a basis of $\Lambda^1_{{\rm inv}}$ for $\SU$ with a constant $0\,\neq\,\mu\,\in\,\R$. This relation turns out to be compatible with the constraints \eqref{coH}, giving 
\beq
\label{exc}
(1\,+\,q^{-2})\gamma\,=\,\alpha\,=\,q^{-6}\beta\,\neq\,0,
\eeq
 providing  us what we assume to be a Hodge duality operator corresponding to a Cartan-Killing structure on $\SU$.
 
 \subsection{Yang-Mills equations over $\SU$}
 \label{ss:eqq}
 
Together with the commutator structure \eqref{qcv}  given by the braiding,  we consider $\mathcal{X}$  as the quantum Lie algebra associated to the Woronowicz's calculus. Its spin $j\,=\,0, \,1/2,\,1,\ldots$ representations on $\C^n$, with $n\,=\,2j+1$, are well known \cite{KS97}. The spin $j\,=\,1/2$ representation is given for instance by: 
\beq
X_{z}\,=\,\left(\begin{matrix} 1 & 0 \\ 0 & -q^2\end{matrix}\right), \qquad
X_{-}\,=\,\left(\begin{matrix} 0 & 1 \\ 0 & 0\end{matrix}\right), \qquad
X_{+}\,=\,\left(\begin{matrix}  0 & 0 \\ 1 & 0 \end{matrix}\right).
\label{repr}
\eeq 
We define the free $\ASU$-module $\mathcal{E}\,=\,\ASU^{\otimes2}\,=\,\C^2\,\otimes\,\ASU$ and  the vector potential  (the connection 1-form, see \eqref{ccu}) in analogy to \eqref{calpha}:
 \beq 
\mathbb{M}_2(\ASU)\,\otimes_{\ASU}\Lambda^1\,\ni\,\mathfrak{A}\,=\,\epsilon\,\sum_j\,X_j\otimes\,\omega_j\,=\,\epsilon\left(\begin{matrix} \omega_z & \omega_- \\ \omega_+  & -q^2\,\omega_z \end{matrix}\right).
\label{poym}
\eeq 
For $\epsilon\,=\,1$ such a vector potential gives the Maurer-Cartan form on $\SU$, with  $\mathfrak{A}\,=\,u^{-1}\dd u$ where $u$ is given in \eqref{Us}. This means that we have:  
\beq
\epsilon\,\dd\,\mathfrak{A}\,=\,-\,\mathfrak{A}\wedge\mathfrak{A}
\label{qua}
\eeq
while the curvature (see \eqref{dcm}) is 
\beq 
F\,=\,\dd\,\mathfrak{A}\,+\,\mathfrak{A}\wedge\mathfrak{A}\,=\,(1-\epsilon)\dd\,\mathfrak{A}.
\label{cuym}
\eeq 
From \eqref{ckq} it is now immediate to obtain:
\beq 
\ast_H\,F\,=\,(1-\epsilon)\mu\,\mathfrak{A}
\label{defi}
\eeq
and 
\beq
D(\ast_H \,F)\,=\,\mu\,(1-\epsilon)\,(1-2\epsilon)\,\dd\,\mathfrak{A}.
\label{meq}
\eeq
If we choose $\epsilon\,=\,1/2$ it is clear that we have a solution of the Yang-Mills equations on $\SU$, which is the natural analogue of the classical one on $\bS^3$.

\subsection{The extension to $\R_{q}^{4}\backslash\{0\}$}
In the classical setting we extended the meron solution to $\R^4\backslash\{0\}$, obtaining the solution to Yang-Mills equations given in \cite{DFF76}. We generalise this procedure to the quantum setting. 
We consider $\A(\R^4_q)$ to be the unital involutive $*$-algebra generated by (again, as in section \ref{s:YMq}, we consider $0<q<1$)
$x_1,x_2,x_3,x_4$ with commutation relations:
\begin{align}
&x_ix_j    =qx_jx_i \qquad\quad\forall\; i<j,\;i+j\neq 5, \nn \\
&x_2x_3=x_3x_2,  \qquad\quad
x_1x_4-x_4x_1 =(q^{-1}-q)x_2x_3.
\label{cr4}
\end{align}
The $*$-structure is given by $x_1^*=qx_4$, $x_2^*=x_3$. This algebra has been considered in \cite{Maj94}. 
It has a positive central group-like element, namely $\mathcal{D}\,=\,qx_1x_4+q^2x_2x_3$. 
The localization $\R^4_q\backslash\{0\}$, space underlying the quantum group $GL_q(1,\Q)$ of invertible quaternions (\cite{BL12, Fio07, Fior7}) is obtained
by adding the square root of such an element as an invertible  central generator $r^{-1}=(r^{-1})^*$ to the algebra $\A(\R^4_q)$ and a relation:
\begin{equation}
\label{r2}
(qx_1x_4+q^2x_2x_3)r^{-2}=1 \;.
\end{equation}
The Hopf algebra structure of $GL_q(1,\Q)$ is obtained by imposing that the matrix
$$
Q\,=\,\bigg(\!\begin{array}{rr}
x_1 & -qx_3 \\[1pt]
x_2 &  qx_4
\end{array}\!\bigg)
$$
is a corepresentation, namely $\Delta\,Q\,=\,Q\,\otimes \,Q,\, S(Q)\,=\,Q^*,\,\varepsilon(Q)\,=\,1$.
A surjective Hopf $*$-algebra morphism $\pi\,:\,\A(GL_q(1, \Q))\,\to\, \A(\rm{SU_q(2)})$ is given by  
$$
\pi(Q)\,=\,u,\qquad\qquad
\pi( r )\,=\,1,
$$ with the matrix $u$ defined in \eqref{Us}.  This means that ${\rm SU_q(2)}$ is a quantum subgroup of $GL_q(1, \Q)$. An injective Hopf $*$-algebra morphism is $i\,:\,\A({\rm SU_q(2)})\,\hookrightarrow\,\A(GL_q(1, \Q))$, given by 
$$
i(u)\,=\,r^{-1}Q.
$$
To pull the solution of the Yang-Mills  equations on ${\rm SU_q(2)}$ given in \eqref{poym} with $\epsilon\,=\,1/2$ back to a solution of the Yang-Mills equations  on $\R^4_q\backslash\{0\}\,\sim\,GL_q(1,\Q)$ we use the formalism  outlined in section \ref{sec:3} on warped products of non commutative spaces.

Following the notations of section \ref{sec:3},  $\mathcal{A}_1$ is the radial subalgebra in $\A(GL_q(1,\Q))$ generated by $(r, r^{-1})$, with $\Lambda_1$  the free one dimensional bimodule over $\A_1$ generated by $\dd r$ with 
$\dd r^{-1}\,=\,-r^{-2}\dd r$. We also have $\mathcal{A}_2\,=\,\ASU$ and then
$\Lambda_2$ is the exterior algebra $\Lambda^{\bullet}$ built over the Woronowicz' calculus.  
The bimodule $\Lambda^1$ of 1-forms on $GL_q(1,\Q)$ is generated by $\{\dd x_a, \dd r\}$ with $a\,=\,1,\ldots,4$. The exact 1-forms (that we collectively denote by $\dd Q$)  are defined in terms of the map $\pi$, following \eqref{eq:diff}, as
\beq
\label{egq}
\dd Q\,=\,(\dd r)\pi(Q)\,+\,r\,\dd(\pi(Q)):
\eeq 
in order to simplify the notations, we omitted in this expression the tensor product and denoted  by $\dd(\pi(Q))$ the exact forms coming from the Woronowicz calculus on $\SU$.
The first order differential calculus  $(\Lambda^1, \dd)$ turns out to be left-covariant with respect to the coproduct of $GL_q(1,\Q)$. A basis of left-invariant forms is given by the four elements 
\beq
\label{foi}
\phi\,=\,i(u^*)\,r\dd Q\,=\,Q^*\dd Q.
\eeq
Comparing this case to the theory outlined in sections \ref{ss:wp} and \ref{ss:ym}  we have $n_1\,=\,1, n_2\,=\,3$ and $\mathcal{A}_1\,\ni\,\rho_k\,=\,r^k$  with $\tau_1\,=\,\dd r$ and $\tau_2\,=\,\omega_-\wedge\omega_+\wedge\omega_z$ so that the volume form on $\R^4_q\backslash\{0\}$ is $\tau\,=\,r^3\,\dd r\wedge\,\omega_-\wedge\omega_+\wedge\omega_z$. 
Given \eqref{poym}, we have that
\beq
i^*(\mathfrak{A})\,=\,\frac{1}{2}\,r^{-1}\left(
\begin{array}{cc} qx_4\dd(r^{-1}x_1)\,+\,x_3\dd(r^{-1}x_2) &   x_3\dd(r^{-1}qx_4)\,-\,q^2x_4\dd(r^{-1}x_3) \\ x_1\,\dd(r^{-1}x_2)\,-\,q\,x_2\,\dd(r^{-1}x_1)
 & -q^2\{qx_4\dd(r^{-1}x_1)\,+\,x_3\dd(r^{-1}x_2)\}
\end{array}
\right)
\eeq
will be the quantum meron solution on $\R^4_q\backslash\{0\}$.

\end{document}